\documentclass[journal]{IEEEtran}

\usepackage{newtxtext}         
\usepackage{newtxmath}         
\usepackage{helvet}            
\usepackage{courier}           
\usepackage{type1cm}           

\usepackage{amsmath,amsfonts}          
\usepackage{amssymb}                   
\usepackage{mathtools}                 
\usepackage{amsthm}                    
\usepackage{mathrsfs}                  

\usepackage{graphicx}                  
\DeclareGraphicsExtensions{.pdf,.png,.jpg,.eps}  
\usepackage[caption=false,font=normalsize,labelfont=sf,textfont=sf]{subfig}  

\usepackage{array}                     
\usepackage{booktabs}                  
\usepackage{threeparttable}            
\usepackage{multirow}                  
\usepackage{float}                     
\usepackage{stfloats}                  
\usepackage{multicol}                  
\usepackage[table]{xcolor}             

\usepackage{textcomp}                  
\usepackage{verbatim}                  
\usepackage{comment}                   
\usepackage{enumitem}                  
\usepackage{setspace}                  
\usepackage[bottom]{footmisc}          
\usepackage{ifthen}                    

\usepackage[noadjust]{cite}            
\usepackage{url}                       

\usepackage{makeidx}                   
\makeindex                             

\usepackage{algorithm}                 
\usepackage{algorithmic}               
\usepackage{longtable}  
\usepackage{ragged2e}   

\newcommand{\rr}{{\mathbb{R}}}          

\theoremstyle{definition}
\newtheorem{theorem}{Theorem}
\newtheorem{lemma}[theorem]{Lemma}      
\newtheorem{proposition}[theorem]{Proposition}  
\newtheorem{definition}{Definition}
\newtheorem{remark}{Remark}

\newtheorem{example}{Example}
\newtheorem{assumption}{Assumption}
\newtheorem{condition}{Condition}       

\allowdisplaybreaks                     
\begin{document}
\setstretch{0.94}

	\title{Decentralized and Equilibrium-Set-Oriented Stability Analysis and Control for Power Systems}
	\author{Peng Yang,~\IEEEmembership{Member,~IEEE},
		Liaoyuan Yang,
		and Feng Liu,~\IEEEmembership{Senior Member,~IEEE}%
	\thanks{P. Yang and L. Yang are with the School of Electronics and
	Information, Xi'an Polytechnic University, Xi'an 710048, China (e-mail:
	p-yang13@tsinghua.org.cn).}%
	\thanks{F. Liu is with the Department of Electrical Engineering, Tsinghua
	University, Beijing 100084, China (e-mail: lfeng@tsinghua.edu.cn).}}%

	\maketitle
	
	\begin{abstract}
		Conventional power-system stability analysis is largely centralized and centered on a single equilibrium point, which becomes increasingly restrictive in the presence of large-scale fluctuating renewable generation. This paper develops a decentralized framework for stability analysis and control that certifies the asymptotic stability of an equilibrium set rather than that of a given single operating point. To this end, we introduce a new notion termed input--output differential passivity (IODP), which decomposes equilibrium-set stability of the interconnected system into local requirements imposed on individual devices. These requirements are formulated without embedding a particular operating equilibrium into the local conditions; once the certified regions are constructed, stability verification for a given operating scenario reduces to checking whether its equilibrium lies in the certified set. The proposed conditions require each bus to possess a sufficient level of IODP, quantified by an IODP index. To compensate for an IODP shortage, we further develop an I/O-transformation-based passivation controller that reshapes the local input--output behavior of the corresponding device. In this way, all grid-connected components can be made to satisfy the decentralized conditions for system-wide stability. The proposed framework is validated on a modified IEEE 39-bus system. Simulation results demonstrate that it provides a scalable and equilibrium-set-oriented solution for stability certification and control under highly variable operating conditions.

	\end{abstract}

	\def\abstractname{Note to Practitioners}
	\begin{abstract}
		This paper addresses stability challenges caused by the large-scale integration of fluctuating renewable energy sources. Conventional stability-analysis frameworks are mainly centralized and tailored to a single operating equilibrium. As power systems become larger, more heterogeneous, and more variable, such frameworks may become increasingly difficult to use because of both computational burden and frequent equilibrium changes. The method proposed in this paper derives certified regions from local bus-level conditions only, which makes the assessment process more scalable for large systems. In addition, it certifies the stability of an equilibrium set rather than that of one specific equilibrium point. Therefore, when operating conditions change, the subsequent verification task is reduced to checking whether the new equilibrium remains inside the certified set, avoiding repeating system-wide analysis from scratch. To complement the analysis framework, we further design a local control strategy that adjusts the IODP characteristics of individual devices through a simple feedforward/feedback structure, so that the overall system can satisfy the proposed decentralized stability conditions.

	\end{abstract}

	\begin{IEEEkeywords}
		Power system stability, decentralized stability analysis, equilibrium-set stability, input--output differential passivity, passivation control, renewable-rich power systems.
	\end{IEEEkeywords}

	\IEEEpeerreviewmaketitle

	\section{INTRODUCTION}
	\IEEEPARstart{S}{tability} has long been one of the most critical issues in power-system operation \cite{b58}, because its loss can trigger widespread blackouts and cause substantial socioeconomic damage \cite{b73}. Conventional stability analysis is largely grounded in a centralized, equilibrium-point-oriented paradigm. This paradigm emerged from traditional power systems dominated by a small number of large and controllable synchronous generators, and it has been highly effective in that context. Within this framework, methods such as eigenvalue analysis \cite{b1}, the Nyquist criterion \cite{b2}, and direct methods \cite{b3} all require a precise system-wide equilibrium point, which serves as the basis for model linearization or energy-function construction. As a result, stability assessment typically depends on centralized computations using system-wide information.

    This established paradigm, however, is being challenged by the ongoing transformation of power grids. The large-scale integration of fluctuating renewable generation, together with the massive deployment of heterogeneous inverter-interfaced devices, is fundamentally reshaping network structure and dynamic behavior \cite{b60,b65}. These changes increase operational complexity and introduce new stability challenges \cite{b52,b63}. In this setting, the conventional paradigm faces several fundamental limitations: it can impose heavy computation and communication burdens \cite{b66}, raise concerns about model privacy \cite{b67}, and, most importantly, rely on the availability of a fixed and explicitly known equilibrium point, which may be difficult to obtain under highly variable renewable operating conditions \cite{b68}. Consequently, ensuring stability in future power systems calls for a different analytical framework. This motivates the central question of this paper: how can a system-level stability property be certified through a decentralized framework that can accommodate a range of operating equilibria rather than a single prescribed one?
    


To address this question, prior studies have developed decentralized/distributed stability analysis methods for large-scale power systems. These methods can be broadly classified as linear, including positive realness \cite{b4}, Nyquist-based criteria \cite{b5}, and eigenvalue approximation \cite{b6}, and nonlinear, including passivity-based methods \cite{b27}, dissipative analysis \cite{b53}, input-to-state stability theory \cite{b8}, and sum-of-squares techniques \cite{b9}. However, most of them remain equilibrium-point-oriented, relying on linearization or explicit equilibrium information, and often requiring cross-subsystem data or centralized computation. Recent efforts include replacing equilibrium with synchronization \cite{b10,b11}, which still requires centralized computation, and matrix-phase-theory-based decentralized analysis \cite{b25}, which may be limited by the sectoriality assumption at low frequencies \cite{b36}. Along this direction, distributed control strategies have also been explored to enhance scalability and adaptability under varying operating conditions \cite{b75}.

Among existing decentralized stability frameworks, passivity stands out for interconnected nonlinear systems because of its compositional structure and scalability \cite{b37,b51}. Moreover, passivity naturally links stability analysis with controller synthesis and has been widely used in power-system control design \cite{b40,b43,b44}. However, most existing passivity-based methods still follow the classical equilibrium-point-oriented formulation, which requires explicit equilibrium information and is therefore less suitable under operating conditions with frequent equilibrium variations. This motivates a generalized passivity-based strategy that supports decentralized, equilibrium-set-oriented stabilization without embedding a particular equilibrium point into the local conditions. 

In this study, we build on the seminar work \cite{b35}, which introduced the concept of input--output differential passivity (IODP). Unlike classical passivity \cite{b37,b51}, IODP is formulated in terms of input and output differentials, thereby removing the explicit dependence of the dissipation inequality on a particular equilibrium point. This feature makes it well suited to decentralized stability analysis with local conditions that do not need to be tailored to a prescribed operating equilibrium. Our main contributions are summarized as follows.
\begin{itemize}
    \item \textbf{A decentralized stability-analysis framework for equilibrium sets.} Compared with the preliminary IODP formulation in \cite{b35}, this paper develops a more concrete stability-analysis framework by introducing the IODP index and corresponding certified regions for individual devices. On this basis, we establish an equilibrium-set-oriented criterion for system-wide stability in a decentralized form. Unlike conventional point-wise approaches, the resulting methodology derives local conditions without selecting a particular equilibrium point. Once the local certified regions are constructed, certifying a given operating scenario only requires checking whether the corresponding local equilibrium components lie in these regions. All required conditions are evaluated using local device models only, without centralized computation, which improves the scalability of stability assessment for large-scale grids under varying operating conditions.
    
\item \textbf{Decentralized passivation control based on the IODP index.} Beyond stability analysis, this paper develops an I/O-transformation-based control scheme that uses the IODP index to quantify and regulate the passivity shortage or excess of each device. This additional control layer turns the proposed framework from a certification tool into a constructive design method: devices that do not satisfy the required local conditions can be actively reshaped so as to enforce the conditions required for system-wide stability. The resulting method provides a scalable control solution that remains applicable under changing operating equilibria.
\end{itemize}

The remainder of this paper is organized as follows. Section II introduces the power-system model and formulates the problem. Section III develops the IODP framework and local characterization. Section IV establishes the IODP-based decentralized equilibrium-set stability theory. Section V presents the passivation-control design. Section VI reports case studies, and Section VII concludes the paper.

\textit{Notations}: We use bold letters to represent matrices, e.g., \(\boldsymbol{I}\) for the unit matrix. \(||\cdot||\) represents the Euclidean norm. \(\det(\cdot)\) denotes the determinant of a matrix. \(\mathrm{Int}(\cdot)\) denotes the interior of a set.
\(|\mathcal{V}|\) represents the cardinality of the set \(\mathcal{V}\), i.e., the number of elements in the set. For a set of vectors \(\{x_i\}_{i\in\mathcal{N}}\), the operator \(\mathrm{col}(x_i)_{i\in\mathcal{N}}\) denotes the stacking of these elements to form a column vector. $j=\sqrt{-1}$ denotes the imaginary unit.

\section{Problem Formulation}
This section presents the structure-preserving power system model and formulates our research question.
\subsection{Power System Model}
To provide a unified modeling foundation for decentralized analysis, this subsection formulates the interconnected power system as a graph-coupled dynamic--static DAE model. Specifically, a power system with $N$ buses and transmission lines is represented by an undirected graph $\mathcal{G} = (\mathcal{V}; \mathcal{L})$, where $\mathcal{V}$ and $\mathcal{L}$ denote the sets of buses and lines, respectively. Each bus is associated with a voltage \(V_i = (V_{Di}, V_{Qi})^T\in\rr^2\) and a current injection \(I_i = (I_{Di}, I_{Qi})^T\in\rr^2\), both defined in the common DQ coordinate. The device connected to the bus determines the relation between $V_i$ and $I_i$, and can be either dynamic (e.g., synchronous generators) or static (e.g., ZIP loads). Let $\mathcal{V}_1$ and $\mathcal{V}_2$ denote the sets of dynamic and static buses, respectively, with $\mathcal{V}_1\cap\mathcal{V}_2=\emptyset$ and $\mathcal{V}_1\cup\mathcal{V}_2=\mathcal{V}$.
\subsubsection{Dynamic Bus}
We describe the dynamic bus with a general input-state-output model
\begin{equation}\label{eq1}
    \left\{
    \begin{aligned}
        \dot{x}_i &= f_i(x_i, u_i) \\
        y_i &= h_i(x_i, u_i)
    \end{aligned}
    \right. \,,\forall i \in \mathcal{V}_1
\end{equation}
where \( x_i \in \mathbb{R}^{n_i} \) is the bus state variable, \( u_i \in \mathbb{R}^2 \) is the input variable, and \( y_i \in \mathbb{R}^2 \) is the output variable. The functions \( f_i: \mathbb{R}^{n_i} \times \mathbb{R}^2 \to \mathbb{R}^{n_i} \) and \( h_i: \mathbb{R}^{n_i} \times \mathbb{R}^2 \to \mathbb{R}^2 \) are continuously differentiable. We allow $u_i=V_i$, $y_i=-I_i$ or $u_i=-I_i$, $y_i=V_i$, depending on the specific device. 
\subsubsection{Static Bus}
We describe the static bus with a general input-output model
\begin{equation}\label{eq2}
    y_i = h_i(u_i)\,,\;\forall i \in \mathcal{V}_2,
\end{equation}
where \( h_i: \mathbb{R}^2 \to \mathbb{R}^2 \) is a continuously differentiable function. And $u_i=V_i$, $y_i=-I_i$ or $u_i=-I_i$, $y_i=V_i$, depending on the specific device.
\subsubsection{Power Network Coupling}
The bus voltages and currents satisfy Kirchhoff’s laws via the network admittance matrix \(\boldsymbol{Y} = \boldsymbol{G} + j\boldsymbol{B}\), where \(\boldsymbol{G}\) is the conductance matrix, and \(\boldsymbol{B}\) is the susceptance matrix. Defining aggregated vectors \(V_D = \text{col}(\{V_{Di}\}_{i\in\mathcal{V}})\), \(V_Q = \text{col}(\{V_{Qi}\}_{i\in\mathcal{V}})\), \(I_D = \text{col}(\{I_{Di}\}_{i\in\mathcal{V}})\), \(I_Q = \text{col}(\{I_{Qi}\}_{i\in\mathcal{V}})\), the network constraint is:
\begin{equation}
    I_D + jI_Q = \boldsymbol{Y}(V_D + jV_Q). \label{eq3}
    \end{equation}
This forms a static coupling subsystem \(H_{\text{net}}: u_{\text{net}} \mapsto y_{\text{net}}\) 
\begin{equation}
    \label{eq4}
    y_{\text{net}} = h_{\text{net}}(u_{\text{net}}),
\end{equation}
where we define $u_{\text{net}}:=\text{col}(\{y_{i}\}_{i\in\mathcal{V}})\in \mathbb{R}^{m}$, $y_{\text{net}}:=-\text{col}(\{u_{i}\}_{i\in\mathcal{V}})\in \mathbb{R}^{m}$, and \( m:= 2|\mathcal{V}|\). 

Combining each bus \eqref{eq1}, \eqref{eq2}, and the power network \eqref{eq4}, the entire power system is a feedback-interconnected system, as shown in Fig. \ref{fg1}. We can write its compact DAE form as
\begin{equation}\label{eq5}
\left\{
\begin{aligned}
    \dot{x} = f(x, u)\\
    0 = g(x, u)
\end{aligned}\right.,
\end{equation}
where \(x = \text{col}(\{x_i\}_{i\in\mathcal{V}_1})\) is the state variable, \(u \in \mathbb{R}^{m}\) is the algebraic variable, \(f = \text{col}(\{f_i\}_{i\in\mathcal{V}_1})\), and \(g(x, u) = h_{\text{net}}(h(x, u)) + u\), where \(h = \text{col}(\{h_i\}_{i\in\mathcal{V}})\).
\begin{figure}[!htb]
    \centering
    \includegraphics[width=5.5cm,height=3.3cm]{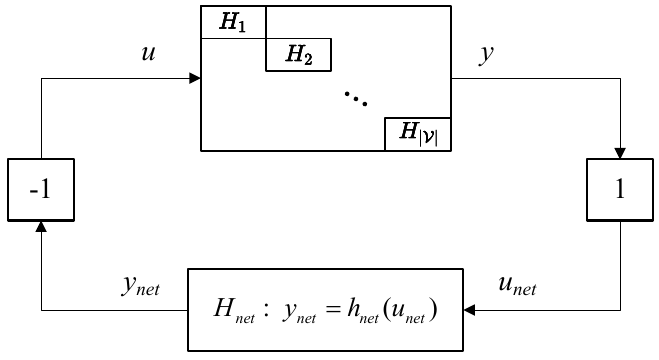}
    \caption{Feedback interconnection system schematic diagram.}
    \label{fg1}
\end{figure}
\begin{definition}\label{d3}
\((x^*,u^*)\) is called an equilibrium point of power system \eqref{eq5} if 
\[
\begin{cases}
0 = f(x^*, u^*) \\
0 = g(x^*,u^*).
\end{cases}
\]
\end{definition}
\begin{definition}\label{d4}
For a given domain \(\mathcal{D}_G\), the set of all equilibrium points of power system \eqref{eq5} contained in \(\mathcal{D}_G\) is denoted by
\[
\mathscr{E}:=\{(x,u)\in\mathcal{D}_G\mid f(x,u)=0,\ g(x,u)=0\}.
\]
We are interested in how to constructe \(\mathcal{D}_G\) such that $\mathscr{E}$ is asymptotically stable\footnote{The asymptotic stability of the set $\mathscr{E}$ means that for any $\varepsilon>0$, there exists $\delta>0$ such that $\text{dist}((x(0),u(0)),\mathscr{E})<\delta$ and $g(x(0),u(0))=0$ implies $\text{dist}\left((x(t),u(t)),\mathscr{E}\right)<\varepsilon$, $\forall t\geq0$, and that as $t\to\infty$, $\text{dist}\left((x(t),u(t)), \mathscr{E}\right)\to0$.}.
\end{definition}
The local component of an equilibrium point $(x^*,u^*)$---namely, $(x_i^*,u_i^*)$ for each $i\in\mathcal{V}_1$ and $u_i^*$ for each $i\in\mathcal{V}_2$---is referred to as the local equilibrium at bus $i$. We remark that power-system equilibria are generally non-unique because of the nonlinearities in $f$ and $g$.
\subsection{Problem Statement}
This paper addresses the following two objectives for the interconnected power system \eqref{eq5}: (i) establish an equilibrium-set-oriented and fully decentralized stability criterion such that the asymptotic stability of an equilibrium set \(\mathscr{E}\) can be certified without specifying any particular equilibrium point and by verifying only local bus-level models; and (ii) for buses that do not satisfy the decentralized conditions, develop a local passivation control to reshape their input--output relation so that they satisfy the proposed stability conditions, stabilizing the overall power system towards the equilibrium set \(\mathscr{E}\).

\section{IODP Framework and Local Characterization}
This section introduces the IODP framework used in this paper, including dynamic/static definitions, verifiable local conditions, and representative device-level properties.
\subsection{IODP Definitions for Dynamic and Static Buses}
We first define IODP for both the dynamic bus \eqref{eq1} and the static bus \eqref{eq2}. These bus-level properties will serve as the foundation of our decentralized equilibrium-set-oriented analysis. 
For the dynamic buses \eqref{eq1}, we define:
\begin{definition}[Dynamic IODP]\label{d1}
System \eqref{eq1} is said to be input-output differential passive in a domain \( \mathcal{D}_i\subset \mathbb{R}^{n_i} \times \mathbb{R}^2\) with IODP index \((\sigma_i,\rho_i)\), if there exists a continuously differentiable function \( S_i:\mathbb{R}^{n_i} \times \mathbb{R}^2 \to \mathbb{R} \) and three \( \mathcal{K} \)-class functions \( \alpha, \beta, \) and \( \gamma \) such that:
\begin{enumerate}
    \item For any \( (x_i,u_i) \in \mathcal{D}_i \), 
    \[
    \alpha(\|f_i(x_i,u_i)\|) \leq S_i(x_i,u_i) \leq \beta(\|f_i(x_i,u_i)\|);
    \]
    \item For any \( (x_i,u_i) \in \mathcal{D}_i \) and any \( \dot{u}_i \in \mathbb{R}^2 \),
    \begin{equation}\label{eq7}
    \begin{split}
    \frac{\partial S_i(x_i,u_i)}{\partial x_i} f_i(x_i,u_i) + \frac{\partial S_i(x_i,u_i)}{\partial u_i} \dot{u}_i & \leq\\\begin{bmatrix}
        \dot{u}_i\\\dot{y}_i
    \end{bmatrix}^{\mathrm{T}}\begin{bmatrix}
        -\sigma_i \boldsymbol{I}&\frac{1}{2}\boldsymbol{I}\\\frac{1}{2}\boldsymbol{I}&-\rho_i \boldsymbol{I}
    \end{bmatrix}\begin{bmatrix}
        \dot{u}_i\\\dot{y}_i
    \end{bmatrix}- \gamma(\|f_i(x_i,u_i)\|),
    \end{split}
    \end{equation}
    where \( \dot{y}_i := \frac{\partial h_i(x_i,u_i)}{\partial x_i} f_i(x_i,u_i) + \frac{\partial h_i(x_i,u_i)}{\partial u_i} \dot{u}_i \), \(\boldsymbol{I}\) is identity matrix.
\end{enumerate}
We denote such a system as $\mathrm{IODP}$\( (\sigma_i,\rho_i;\mathcal{D}_i) \). If \( \mathcal{D}_i = \mathbb{R}^{n_i} \times \mathbb{R}^2 \), the system is said to be globally IODP.
\end{definition}

For static buses, we define:
\begin{definition}[Static IODP]\label{d2}
Consider the static system \eqref{eq2}. Given a domain \( \mathcal{D}_i \subset \mathbb{R}^2 \), if for any \( u_i \in \mathcal{D}_i \) the following holds:
\begin{equation}\label{eq8}
    \begin{bmatrix}
        \boldsymbol{I}\\
        \frac{\partial h_i(u_i)}{\partial u_i}
    \end{bmatrix}^\mathrm{T}
    \begin{bmatrix}
        -\sigma_i \boldsymbol{I} & \frac{1}{2}\boldsymbol{I}\\
        \frac{1}{2}\boldsymbol{I} & -\rho_i \boldsymbol{I}
    \end{bmatrix}
    \begin{bmatrix}
        \boldsymbol{I}\\
        \frac{\partial h_i(u_i)}{\partial u_i}
    \end{bmatrix} \succeq 0,
\end{equation}
Then the system is said to be input-output differential passive in \( \mathcal{D}_i \) with IODP index \((\sigma_i,\rho_i)\), denoted as $\mathrm{IODP}$\( (\sigma_i,\rho_i;\mathcal{D}_i) \). If \( \mathcal{D}_i = \mathbb{R}^2 \), it is said to be globally IODP.
\end{definition}

The key distinction of IODP, as indicated by its name, is that both ports are described in differential form, as illustrated in Fig.~\ref{fg16}. This is reflected in the dissipation inequalities, where only differential variables are involved. The property is formulated without anchoring to any specific equilibrium, which enables equilibrium-set-oriented stability analysis.

The IODP index \(\sigma_i\) and \(\rho_i\) provide a quantitative measure of the differential passivity level. We remark that for a given subsystem and domain, the feasible IODP index pair is generally not unique: multiple \((\sigma_i,\rho_i)\) pairs may satisfy the IODP definition. They characterize the IODP margins of the differential input and differential output channels, respectively: positive indicates IODP excess, zero indicates critical IODP, and negative indicates IODP shortage. This channel-wise interpretation is also useful for control synthesis: one channel can be driven to the critical level (index set to zero), while the remaining shortage is absorbed by the other channel. As shown later, this strategy guarantees \(\sigma_i\rho_i=0\), which directly supports the solvability of the control design and broadens applicability to a wide class of IODP-shortage subsystems.
\begin{figure}[htb]
    \centering
    \includegraphics[width=0.45\textwidth]{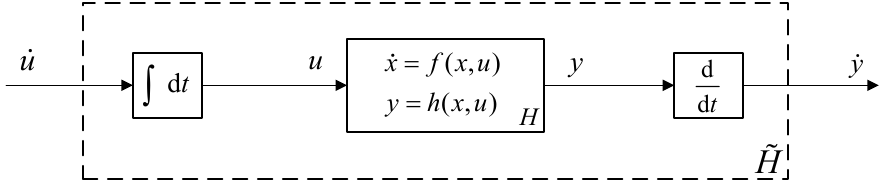}
    \caption{Input-Output differential passivity.}
    \label{fg16}
\end{figure}

\begin{remark}
Similar passivity concepts with differentiation include delta dissipativity \cite{b12} and Krasovskii passivity \cite{b13}. Our definition differs in that it introduces three \( \mathcal{K} \)-class functions to bind the storage function \( S \) and its derivative. Moreover, in Definition \ref{d1}, 1) and 2) are only required to hold within a specific domain \(\mathcal{D}\subset\mathbb{R}^n\times\mathbb{R}^m\), whereas the definition of delta dissipativity or Krasovskii passivity necessitates its validity for all possible \((x,u)\in\mathbb{R}^n\times\mathbb{R}^m\). The introduction of \(\mathcal{D}\) transforms the originally globally required conditions into locally sufficient ones, thereby broadening the applicability of the definition. In practical terms, most dynamic components in power systems can only satisfy local conditions.
\end{remark}

\subsection{Krasovskii-type Storage Function for IODP}
Building on Definition \ref{d1}, we now focus on the Krasovskii-type storage function class and derive its corresponding matrix-inequality characterization. This provides a theoretical reformulation of the IODP condition in terms of local differential information, as summarized in Proposition \ref{p4}.

\begin{proposition}
    \label{p4}Consider system \eqref{eq1}. If there exists a positive definite matrix $\boldsymbol{P}_i = \boldsymbol{P}_i^\mathrm{T} \succ 0$ and a real number $\epsilon_i > 0$ such that, for any $(x_i, u_i) \in \mathcal{D}_i$, the following holds:
\begin{equation}\label{eq28}
\begin{aligned}
&\begin{bmatrix}
\boldsymbol{P}_i\dfrac{\partial f_i}{\partial x_i} + \dfrac{\partial f_i^\mathrm{T}}{\partial x_i}\boldsymbol{P}_i + \epsilon_i \boldsymbol{I} & \boldsymbol{P}_i\dfrac{\partial f_i}{\partial u_i} \\
\dfrac{\partial f_i^\mathrm{T}}{\partial u_i}\boldsymbol{P}_i & \boldsymbol{0}
\end{bmatrix} \\
&- 
\begin{bmatrix}
\boldsymbol{0} & \boldsymbol{I} \\
\dfrac{\partial h_i}{\partial x_i} & \dfrac{\partial h_i}{\partial u_i}
\end{bmatrix}^\mathrm{T}
\begin{bmatrix}
-\sigma_i \boldsymbol{I} & \dfrac{1}{2}\boldsymbol{I} \\
\dfrac{1}{2}\boldsymbol{I} & -\rho_i \boldsymbol{I}
\end{bmatrix}
\begin{bmatrix}
\boldsymbol{0} & \boldsymbol{I} \\
\dfrac{\partial h_i}{\partial x_i} & \dfrac{\partial h_i}{\partial u_i}
\end{bmatrix} \preceq \boldsymbol{0}
\end{aligned}
\end{equation}
then the system satisfies $\text{IODP}(\sigma_i, \rho_i; \mathcal{D}_i)$.
\end{proposition}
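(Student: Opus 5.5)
We take the Krasovskii-type storage function
\[
S_i(x_i,u_i) := f_i(x_i,u_i)^\mathrm{T}\boldsymbol{P}_i f_i(x_i,u_i)
\]
and verify the two conditions of Definition~\ref{d1} in turn. Condition 1) is immediate from $\boldsymbol{P}_i\succ 0$. We have $\lambda_{\min}(\boldsymbol{P}_i)\|f_i\|^2 \le S_i \le \lambda_{\max}(\boldsymbol{P}_i)\|f_i\|^2$, so we may take $\alpha(s)=\lambda_{\min}(\boldsymbol{P}_i)s^2$ and $\beta(s)=\lambda_{\max}(\boldsymbol{P}_i)s^2$. Both are $\mathcal{K}$-class functions.

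For condition 2), we compute the left-hand side of \eqref{eq7} along an arbitrary $\dot u_i\in\mathbb{R}^2$. Write $\dot f_i := \frac{\partial f_i}{\partial x_i}f_i + \frac{\partial f_i}{\partial u_i}\dot u_i$; this is simply the derivative of $f_i$ along the trajectory. Then
\[
\frac{\partial S_i}{\partial x_i}f_i+\frac{\partial S_i}{\partial u_i}\dot u_i = 2 f_i^\mathrm{T}\boldsymbol{P}_i\Big(\frac{\partial f_i}{\partial x_i}f_i+\frac{\partial f_i}{\partial u_i}\dot u_i\Big) = \begin{bmatrix} f_i\\ \dot u_i\end{bmatrix}^\mathrm{T}\begin{bmatrix}\boldsymbol{P}_i\frac{\partial f_i}{\partial x_i}+\frac{\partial f_i^\mathrm{T}}{\partial x_i}\boldsymbol{P}_i & \boldsymbol{P}_i\frac{\partial f_i}{\partial u_i}\\ \frac{\partial f_i^\mathrm{T}}{\partial u_i}\boldsymbol{P}_i & \boldsymbol{0}\end{bmatrix}\begin{bmatrix} f_i\\ \dot u_i\end{bmatrix}.
\]
Next we note that, by the definition of $\dot y_i$ in Definition~\ref{d1},
\[
\begin{bmatrix}\dot u_i\\ \dot y_i\end{bmatrix} = \begin{bmatrix}\boldsymbol{0} & \boldsymbol{I}\\ \frac{\partial h_i}{\partial x_i} & \frac{\partial h_i}{\partial u_i}\end{bmatrix}\begin{bmatrix} f_i\\ \dot u_i\end{bmatrix}.
\]
Hence the supply-rate quadratic form on the right-hand side of \eqref{eq7} equals the congruence term subtracted in \eqref{eq28}, evaluated at the vector $\xi:=\mathrm{col}(f_i,\dot u_i)$.

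We now multiply \eqref{eq28} on both sides by $\xi^\mathrm{T}$ and $\xi$. This gives
\[
\dot S_i + \epsilon_i\|f_i\|^2 - \begin{bmatrix}\dot u_i\\ \dot y_i\end{bmatrix}^\mathrm{T}\begin{bmatrix}-\sigma_i\boldsymbol{I}&\frac12\boldsymbol{I}\\ \frac12\boldsymbol{I}&-\rho_i\boldsymbol{I}\end{bmatrix}\begin{bmatrix}\dot u_i\\ \dot y_i\end{bmatrix}\le 0 .
\]
This is exactly \eqref{eq7} with $\gamma(s)=\epsilon_i s^2$, which is of class $\mathcal{K}$ since $\epsilon_i>0$. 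Because \eqref{eq28} holds pointwise on $\mathcal{D}_i$ and $\xi$ is arbitrary, in particular for every choice of $\dot u_i$, the inequality holds for all $(x_i,u_i)\in\mathcal{D}_i$ and all $\dot u_i$. This establishes IODP$(\sigma_i,\rho_i;\mathcal{D}_i)$.

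The only delicate point is the bookkeeping in the first step. We must check that the cross term $2f_i^\mathrm{T}\boldsymbol{P}_i\frac{\partial f_i}{\partial u_i}\dot u_i$ splits symmetrically into the two off-diagonal blocks, and that the $(2,2)$ block is zero. We must also confirm that $\dot y_i$ in Definition~\ref{d1} is defined exactly as the stated linear map of $\xi$. This requires $f_i,h_i$ to be $C^1$, as assumed in the model; the quadratic $S_i$ then needs $\partial f_i$ to exist, which it does. Beyond this, the argument is a direct quadratic-form identity, and we expect no substantive obstacle.
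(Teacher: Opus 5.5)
Your proof is correct and follows the paper's argument essentially step for step. Both use the Krasovskii storage function $S_i=f_i^\mathrm{T}\boldsymbol{P}_if_i$ with eigenvalue bounds for condition 1), and both evaluate \eqref{eq28} as a quadratic form at $\mathrm{col}(f_i,\dot u_i)=\mathrm{col}(\dot x_i,\dot u_i)$ to recover \eqref{eq7}; your explicit choices $\alpha(s)=\lambda_{\min}(\boldsymbol{P}_i)s^2$, $\beta(s)=\lambda_{\max}(\boldsymbol{P}_i)s^2$, $\gamma(s)=\epsilon_i s^2$ state precisely what the paper leaves implicit.
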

\begin{proof}
Consider the Krasovskii-type storage function \(S_i(x_i,u_i) = f_i(x_i,u_i)^\mathrm{T}\boldsymbol{P}_if_i(x_i,u_i)\). Since \( \boldsymbol{P}_i \succ 0 \), we have
\[
\lambda_{\text{min}}(\boldsymbol{P}_i) \|f_i(x_i,u_i)\|^2 \leq S_i(x_i,u_i) \leq \lambda_{\text{max}}(\boldsymbol{P}_i) \|f_i(x_i,u_i)\|^2 
\]
where \( \lambda_{\text{min}}(\boldsymbol{P}_i) > 0 \) and \( \lambda_{\text{max}}(\boldsymbol{P}_i) > 0 \) denote the minimum and maximum eigenvalues of matrix \( \boldsymbol{P}_i \), respectively. Thus, \( S_i(x_i,u_i) \) satisfies 1) in Definition \ref{d1}.

On the other hand, calculate the derivative of \( S_i(x_i,u_i) \)
\[
\begin{aligned}
\dot{S}_i(x_i,u_i) &= \frac{\partial S_i(x_i,u_i)}{\partial x_i} \dot{x}_i + \frac{\partial S_i(x_i,u_i)}{\partial u_i} \dot{u}_i \\
&= \begin{bmatrix} \dot{x}_i \\ \dot{u}_i \end{bmatrix}^\mathrm{T} \begin{bmatrix} \boldsymbol{P}_i\frac{\partial f_i}{\partial x_i} + \frac{\partial f_i^\mathrm{T}}{\partial x_i}\boldsymbol{P}_i & \boldsymbol{P}_i\frac{\partial f_i}{\partial u_i} \\ \frac{\partial f_i^\mathrm{T}}{\partial u_i}\boldsymbol{P}_i & \mathbf{0} \end{bmatrix} \begin{bmatrix} \dot{x}_i \\ \dot{u}_i \end{bmatrix}
\end{aligned}
\]

From \eqref{eq28}, for any \( (x_i,u_i) \in \mathcal{D}_i \), we have
\begin{equation}\label{eq50}
\dot{S}_i(x_i,u_i) \leq 
\begin{aligned}
& \begin{bmatrix} \dot{x}_i \\ \dot{u}_i \end{bmatrix}^\mathrm{T} 
\begin{bmatrix} \mathbf{0} & \boldsymbol{I} \\ \frac{\partial h_i}{\partial x_i} & \frac{\partial h_i}{\partial u_i} \end{bmatrix}^\mathrm{T}
\begin{bmatrix} -\sigma_i \boldsymbol{I} & \frac{1}{2}\boldsymbol{I} \\ \frac{1}{2}\boldsymbol{I} & -\rho_i \boldsymbol{I} \end{bmatrix} \\
& \begin{bmatrix} \mathbf{0} & \boldsymbol{I} \\ \frac{\partial h_i}{\partial x_i} & \frac{\partial h_i}{\partial u_i} \end{bmatrix} 
\begin{bmatrix} \dot{x}_i \\ \dot{u}_i \end{bmatrix}
- \epsilon_i \|f_i(x_i,u_i)\|^2
\end{aligned}
\end{equation}
Note that
\[
\begin{bmatrix} \mathbf{0} & \boldsymbol{I} \\ \frac{\partial h_i}{\partial x_i} & \frac{\partial h_i}{\partial u_i} \end{bmatrix} \begin{bmatrix} \dot{x}_i \\ \dot{u}_i \end{bmatrix} = \begin{bmatrix} \dot{u}_i \\ \dot{y}_i \end{bmatrix} 
\]

Thus, from \eqref{eq50}, we get
\[
\dot{S}_i(x_i,u_i) \leq \begin{bmatrix} \dot{u}_i \\ \dot{y}_i \end{bmatrix}^\mathrm{T} \begin{bmatrix} -\sigma_i \boldsymbol{I} & \frac{1}{2}\boldsymbol{I} \\ \frac{1}{2}\boldsymbol{I} & -\rho_i \boldsymbol{I} \end{bmatrix} \begin{bmatrix} \dot{u}_i \\ \dot{y}_i \end{bmatrix} - \epsilon_i \|f_i(x_i,u_i)\|^2 
\]

Therefore, for any \( (x_i,u_i) \in \mathcal{D}_i \), \( S_i(x_i,u_i) \) satisfies 2) in Definition \ref{d1}, which completes the proof.
\end{proof}
Beyond its theoretical role, Proposition \ref{p4} also suggests a practical route to numerically compute IODP index. Since \eqref{eq28} must hold for all \((x_i,u_i)\in\mathcal{D}_i\), the problem is essentially a robust LMI condition. A common implementation is to fix \((x_i,u_i)\) (or a grid/sampled set in \(\mathcal{D}_i\)), so that \eqref{eq28} becomes an LMI in the decision variables \(\boldsymbol{P}_i\), \(\sigma_i\), and \(\rho_i\), which can be handled by SDP solvers. Conversely, for fixed \(\boldsymbol{P}_i\), \(\sigma_i\), and \(\rho_i\), one may scan \((x_i,u_i)\) to identify a certified region \(\mathcal{D}_i\).

\subsection{Representative Dynamic-Bus IODP Example}
To illustrate how the verifiable condition is applied in practice, this subsection presents a synchronous-generator example and computes its IODP region.
\begin{example}[IODP region for SG] Considering an SG modeled by the flux-decay 3rd-order dynamics
\begin{equation}\label{eq45}
\begin{cases}
\dot{\delta} = \omega \\
M\dot{\omega} = -D\omega - P^e + P^m + K_I\delta \\
T'_{d0}\dot{E}'_q = -E'_q + I_d(x_d - x'_d) + E_f
\end{cases}
\end{equation}
with the algebraic equation
\begin{equation}\label{eq46}
\begin{cases}
V_d = E'_q + x'_d I_q \\
V_q = -x_q I_d \\
P^e = E'_q I_d + (x'_d - x_q) I_d I_q
\end{cases}
\end{equation}
Here, $K_I \geq 0$ is the (optional) coefficient of frequency integral, which aims to balance the load variation. Table \ref{tab5} reports the parameters of the SG. 
\begin{table}[H]
  \centering
  \caption{Parameters of SG (values in p.u.).}
  \resizebox{\columnwidth}{!}{%
  \begin{tabular}{ccccccccc}
    \toprule
    $M$ & $D$ & $T'_{d0}$ & $x_d$ & $x_q$ & $x'_d$ & $P^m$ & $E_f$ & $K_I$ \\
    \midrule
    0.41 & 0.3 & 5.4 & 0.67 & 0.40 & 0.13 & 0.89 & 0.64 & 0.5 \\
    \bottomrule
  \end{tabular}
  }
  \label{tab5}
\end{table}
  According to Proposition 1, the input-output differential passivity of the SG can be determined by the linear matrix inequality \eqref{eq28}. Under the parameters of this example, the numerical solution obtained in MATLAB is as follows:
\begin{equation}\label{eq47}
    \begin{aligned}
\boldsymbol{P}_i &= \begin{bmatrix}
0.5658&0.0200&0.4116\\
    0.0200  &  0.0619 &  -0.0649\\
    0.4116  & -0.0649   & 3.0907\\
\end{bmatrix}, \\
\sigma_i &= -  8.5098, \quad \rho_i = 0
\end{aligned}
\end{equation}
Thus, SG satisfies \( \text{IODP}(-8.5098, 0; \mathcal{D}_i) \), where \( \mathcal{D}_i \) is a five-dimensional region obtained from \eqref{eq28}. To illustrate, we plot the two-dimensional cross-section (corresponding to \(V-(\delta-\theta)\), which is a subset of set \(\mathcal{D}_i\)) as illustrated in Fig. \ref{fg12}.
\begin{figure}[!t]
    \centering
    \includegraphics[width=0.33\textwidth]{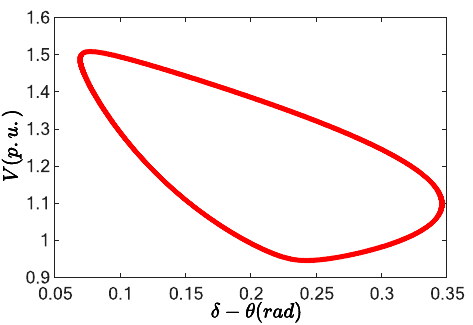}
    \caption{Cross-section of \(\mathcal{D}_i\) of the SG.}
    \label{fg12}
\end{figure}   
\end{example}
\subsection{IODP Properties of Static Subsystems}
This subsection summarizes key IODP properties of some representative static subsystems.
The seminar work \cite{b35} has shown that the power network itself is globally IODP as stated in the following proposition.

\begin{proposition}[\cite{b35}]\label{p1}
The power network \eqref{eq4} as a static subsystem is $\mathrm{IODP}$\( (0,0;\mathbb{R}^m) \).
\end{proposition}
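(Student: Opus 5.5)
Setting $\sigma=\rho=0$ in Definition~\ref{d2}, the weighting matrix becomes $\begin{bmatrix}\boldsymbol{0}&\frac12\boldsymbol{I}\\ \frac12\boldsymbol{I}&\boldsymbol{0}\end{bmatrix}$, and the left-hand side of \eqref{eq8} collapses to $\frac12\big(\boldsymbol{H}+\boldsymbol{H}^\mathrm{T}\big)$, where $\boldsymbol{H}:=\partial h_{\text{net}}/\partial u_{\text{net}}$. I will therefore show that $h_{\text{net}}$ is linear, so $\boldsymbol{H}$ is a constant matrix, and that $z^\mathrm{T}\boldsymbol{H}z\ge 0$ for every $z\in\mathbb{R}^m$. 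This gives \eqref{eq8} at every point of $\mathbb{R}^m$, i.e., global $\mathrm{IODP}(0,0;\mathbb{R}^m)$.

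\emph{Real form of the network.} First I rewrite \eqref{eq3} in real coordinates, with $V$ and $I$ stacked per bus. With a permutation, the relation becomes $I=\tilde{\boldsymbol{Y}}V$, where
$$\tilde{\boldsymbol{Y}}=\begin{bmatrix}\boldsymbol{G}&-\boldsymbol{B}\\ \boldsymbol{B}&\boldsymbol{G}\end{bmatrix}.$$
Because $\boldsymbol{G}$ and $\boldsymbol{B}$ are symmetric (the network is reciprocal), the symmetric part of $\tilde{\boldsymbol{Y}}$ is $\mathrm{diag}(\boldsymbol{G},\boldsymbol{G})$. The remaining part $\begin{bmatrix}\boldsymbol{0}&-\boldsymbol{B}\\ \boldsymbol{B}&\boldsymbol{0}\end{bmatrix}$ is skew-symmetric. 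Hence
$$V^\mathrm{T} I=V^\mathrm{T}\tilde{\boldsymbol{Y}}V=V_D^\mathrm{T}\boldsymbol{G}V_D+V_Q^\mathrm{T}\boldsymbol{G}V_Q\ge 0.$$
The last inequality uses $\boldsymbol{G}\succeq 0$, which holds for a passive network: $\boldsymbol{G}$ is a weighted Laplacian with nonnegative line conductances plus nonnegative shunt conductances. This inequality is the physical core of the proof, namely that the network absorbs nonnegative active power.

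\emph{Hybrid input--output structure.} Partition the buses into the set $\mathcal{A}$, where $u_i=V_i$ and $y_i=-I_i$, and the set $\mathcal{B}$, where $u_i=-I_i$ and $y_i=V_i$. Then the network has input $z:=u_{\text{net}}=\mathrm{col}(-I_\mathcal{A},V_\mathcal{B})$ and output $w:=y_{\text{net}}=\mathrm{col}(-V_\mathcal{A},I_\mathcal{B})$. Since $I=\tilde{\boldsymbol{Y}}V$ is linear, $h_{\text{net}}$ is obtained by a partial inversion of $\tilde{\boldsymbol{Y}}$ with respect to the $\mathcal{B}$-voltages and $\mathcal{A}$-currents. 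I take well-posedness of this inversion for granted, since the paper already treats $h_{\text{net}}$ as a well-defined map $u_{\text{net}}\mapsto y_{\text{net}}$. Consequently $w=\boldsymbol{H}z$ with $\boldsymbol{H}$ constant. Computing the pairing directly,
$$z^\mathrm{T}w=(-I_\mathcal{A})^\mathrm{T}(-V_\mathcal{A})+V_\mathcal{B}^\mathrm{T}I_\mathcal{B}=V^\mathrm{T}I\ge 0.$$
Therefore $z^\mathrm{T}\boldsymbol{H}z\ge0$ for every $z$, which is equivalent to $\boldsymbol{H}+\boldsymbol{H}^\mathrm{T}\succeq0$.

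\emph{Main obstacle.} The step I expect to need the most care is the hybrid (mixed voltage/current) parametrization. I must check the sign conventions so that the pairing $u_{\text{net}}^\mathrm{T}y_{\text{net}}$ reproduces exactly $V^\mathrm{T}I$, with the two minus signs cancelling on $\mathcal{A}$. I must also justify that the partial inverse exists, which requires the relevant principal block of $\tilde{\boldsymbol{Y}}$ to be invertible. Once these are settled, the positivity argument above needs no explicit formula for $\boldsymbol{H}$.
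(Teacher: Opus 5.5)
Your proof is correct. The paper gives no argument for this proposition; it is imported from \cite{b35}. Your derivation therefore supplies the missing justification rather than reproducing one in the text. The identity $u_{\text{net}}^{\mathrm{T}}y_{\text{net}}=V^{\mathrm{T}}I$, with the two sign flips on the $u_i=V_i$ buses cancelling as you checked, transfers to differentials by linearity. It then gives exactly the inequality $\dot{y}_{\text{net}}^{\mathrm{T}}\dot{u}_{\text{net}}\ge 0$ used in the proof of Lemma~\ref{le:1}.

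One justification should be tightened. $\boldsymbol{G}\succeq 0$ does not follow in general from a ``weighted Laplacian plus nonnegative shunts'' structure. Consider a transformer with real off-nominal tap $t$ and series conductance $g>0$. It contributes $g\,aa^{\mathrm{T}}$, with $a=(1/t,\,-1)^{\mathrm{T}}$, on its two buses. The row sums of this term are $g(1-t)/t^2$ and $g(t-1)/t$, and one of them is negative whenever $t\neq 1$. The IEEE 39-bus data used in Section~VI contains such branches. The correct argument is branch-wise: each series branch contributes a rank-one term $g\,aa^{\mathrm{T}}$ with $g\ge 0$, and shunts add nonnegative diagonal conductances, so $\boldsymbol{G}\succeq 0$. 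Reciprocity is not needed either. With $\hat V:=V_D+jV_Q$ one has $V^{\mathrm{T}}I=\mathrm{Re}(\hat V^{\mathrm{H}}\boldsymbol{Y}\hat V)$, so it suffices that $\boldsymbol{Y}+\boldsymbol{Y}^{\mathrm{H}}\succeq 0$. This also covers phase-shifting transformers, where $a$ becomes complex and the branch contributes $2g\,aa^{\mathrm{H}}$ to $\boldsymbol{Y}+\boldsymbol{Y}^{\mathrm{H}}$.

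Your well-posedness caveat is legitimate but costs nothing. Suppose $h_{\text{net}}$ is a well-defined function on all of $\mathbb{R}^m$. Then its graph is the image of the $m$-dimensional subspace $\{I=\tilde{\boldsymbol{Y}}V\}$ under a signed permutation, so $h_{\text{net}}$ is automatically linear and no explicit block-invertibility check is required. If it is not well-defined, the proposition has no content. An example is a network in which every device uses $u_i=V_i$ and there are no shunts, so $\boldsymbol{Y}$ is singular. This is a standing premise of the paper, already built into \eqref{eq4}, not a gap in your proof.
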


Another typical static-bus model in power systems is the ZIP load, which is widely used to capture voltage-dependent load behavior. Consider the ZIP load at node $i$. Let $V_i$ denote the voltage magnitude, and let $P_i^L$ and $Q_i^L$ denote the active and reactive power consumptions, respectively. The ZIP load model is given by
\begin{equation}\label{eq51}
\begin{cases}
P_i^L = Z_p^l V_i^2 + I_p^l V_i + P_0^l \\
Q_i^L = Z_q^l V_i^2 + I_q^l V_i + Q_0^l
\end{cases},
\end{equation}
where $Z_p^l, Z_q^l, I_p^l, I_q^l, P_0^l,$ and $Q_0^l$ are nonnegative constants
corresponding to the constant impedance, constant current, and constant power
components of the load. It holds that
\begin{equation}\label{eq52}
\begin{cases}
-P_i^L = I_{Di} V_{Di} + I_{Qi} V_{Qi} \\
-Q_i^L = I_{Di} V_{Qi} - I_{Qi} V_{Di}
\end{cases},
\end{equation}
which yields
\begin{equation}\label{eq53}
\begin{cases}
I_{Di} = \dfrac{-P_i^L V_{Di} - Q_i^L V_{Qi}}{V_{Qi}^2 + V_{Di}^2} \\
I_{Qi} = \dfrac{-P_i^L V_{Qi} + Q_i^L V_{Di}}{V_{Qi}^2 + V_{Di}^2}
\end{cases},
\end{equation}
with $V_i = \sqrt{V_{Di}^2 + V_{Qi}^2}$.
Therefore, the ZIP load can be regarded as a static input--output mapping
$y_i = h_i(u_i)$ with input
$u_i := (V_{Di}, V_{Qi})^\top$ and output
$y_i := -(I_{Di}, I_{Qi})^\top$.

\begin{proposition}[IODP for ZIP load]\label{p5}
The ZIP load satisfies $\mathrm{IODP}(0,0;\mathcal{D}_i)$ in a certain
domain $\mathcal{D}_i$ defined as in \eqref{eq58}.
\end{proposition}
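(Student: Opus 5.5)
With $\sigma_i=\rho_i=0$, the weighting matrix in \eqref{eq8} has zero diagonal blocks and off-diagonal blocks $\frac{1}{2}\boldsymbol{I}$. Writing $\boldsymbol{J}_i(u_i):=\partial h_i/\partial u_i$, condition \eqref{eq8} therefore reduces to
\[
\tfrac{1}{2}\big(\boldsymbol{J}_i(u_i)+\boldsymbol{J}_i(u_i)^{\mathrm{T}}\big)\succeq 0 .
\]
So the plan is to compute the $2\times 2$ Jacobian of the ZIP map $u_i=(V_{Di},V_{Qi})^{\mathrm{T}}\mapsto y_i=-(I_{Di},I_{Qi})^{\mathrm{T}}$ from \eqref{eq53}. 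We then define $\mathcal{D}_i$ as the set of voltages where its symmetric part is positive semidefinite; this is the domain that \eqref{eq58} should describe. We also need to check that $\mathcal{D}_i$ is a nonempty set of practical relevance, e.g.\ that it contains a neighbourhood of nominal voltage for typical load parameters.

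To carry out the computation cleanly, we pass to polar coordinates $V_{Di}=V\cos\theta$, $V_{Qi}=V\sin\theta$. From \eqref{eq53},
\[
y_i=\frac{1}{V}\boldsymbol{R}(\theta)\begin{bmatrix}P^L(V)\\ Q^L(V)\end{bmatrix},\qquad \boldsymbol{R}(\theta)=\begin{bmatrix}\cos\theta&\sin\theta\\ \sin\theta&-\cos\theta\end{bmatrix},
\]
where $P^L,Q^L$ depend only on $V$. Differentiating with respect to $(V,\theta)$ and composing with the inverse polar Jacobian gives $\boldsymbol{J}_i$. The symmetric part of $\boldsymbol{J}_i$ is congruent, via the orthogonal rotation by $\theta$, to a $\theta$-independent matrix. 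Its trace and determinant are explicit in $g(V):=P^L(V)/V^2$ and $b(V):=Q^L(V)/V^2$ and their derivatives. Concretely, the radial direction produces the entries $\frac{d}{dV}(P^L/V)$ and $\frac{d}{dV}(Q^L/V)$, and the tangential direction produces $P^L/V^2$ and $Q^L/V^2$.

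Because it is a $2\times2$ matrix, positive semidefiniteness is equivalent to nonnegative trace together with nonnegative determinant. I expect these to reduce to two scalar polynomial inequalities in $V$ involving $Z_p^l,I_p^l,P_0^l,Z_q^l,I_q^l,Q_0^l$. That gives $\mathcal{D}_i$ as an annulus-type set in the $(V_{Di},V_{Qi})$ plane, which is the natural candidate for \eqref{eq58}.

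For sanity checks, the pure constant-impedance load $h_i(u)=\boldsymbol{Y}_L u$ should give the conductance condition $G_L\ge0$ globally, recovering Proposition \ref{p1}-type behaviour. The constant-power term $P_0^l$ contributes a negative radial derivative $-P_0^l/V^2$. This is why the result holds only on a domain: $\mathcal{D}_i$ must exclude low voltages where the constant-power term dominates.

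The main obstacle is obtaining the determinant condition in a clean, interpretable form. The reactive components introduce an antisymmetric-looking coupling whose symmetric part, $\frac{1}{2}\big(\frac{d}{dV}(Q^L/V)-Q^L/V^2\big)$, appears squared in the determinant. I would first try to write $\det$ as (radial conductance term)$\times$(tangential conductance term) minus $\frac{1}{4}$(reactive mismatch)$^2$, and simplify that mismatch to $\frac{1}{4}\big(Z_q^l\,\cdot\,0-\dots\big)$-type polynomial expressions after multiplying through by $V^2$. If the closed form becomes unwieldy, I would instead define $\mathcal{D}_i$ directly by the trace and determinant inequalities, which suffices for the statement, and argue nonemptiness near $V=1$ under the condition that the impedance part dominates the constant-power part.
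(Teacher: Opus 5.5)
Your proposal is correct and rests on the same step as the paper. With $\sigma_i=\rho_i=0$, condition \eqref{eq8} is exactly $\boldsymbol{J}_i+\boldsymbol{J}_i^{\mathrm T}\succeq 0$, and $\mathcal{D}_i$ is defined as the set where this holds. That set is literally \eqref{eq58}, because the paper's $\boldsymbol{M}_Z+V_i^{-3}\boldsymbol{M}_I+V_i^{-4}\boldsymbol{M}_P$ is the same Jacobian written in Cartesian form.

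The difference is only in how you compute it. The paper expands $\partial h_i/\partial u_i$ directly in $(V_{Di},V_{Qi})$ and stops at the matrix inequality. Your polar route uses rotation invariance, and carrying it through shows that the symmetric part is orthogonally congruent (via the rotation by $\theta$) to
\[
\begin{bmatrix} Z_p^l-P_0^l/V_i^2 & \tfrac12\big(I_q^l/V_i+2Q_0^l/V_i^2\big)\\ \tfrac12\big(I_q^l/V_i+2Q_0^l/V_i^2\big) & Z_p^l+I_p^l/V_i+P_0^l/V_i^2\end{bmatrix}.
\]
From this matrix:
\begin{itemize}
\item $Z_q^l$ cancels, as you anticipated.
\item The trace $2Z_p^l+I_p^l/V_i$ is automatically nonnegative, so you need only one scalar inequality, not two.
\item $\mathcal{D}_i$ reduces to $(Z_p^lV_i^2-P_0^l)(Z_p^lV_i^2+I_p^lV_i+P_0^l)\ge\tfrac14(I_q^lV_i+2Q_0^l)^2$ with $V_i>0$, a condition on $|V_i|$ only.
\end{itemize}

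This buys you an interpretable, rotation-invariant region and an easy nonemptiness test at $V_i=1$. Nonemptiness is not required by the statement. It also makes explicit that a nontrivial pure constant-power load gives an empty region, since that block alone has eigenvalues $\pm\sqrt{(P_0^l)^2+(Q_0^l)^2}/V_i^2$. The paper's Cartesian expansion, in exchange, matches \eqref{eq58} verbatim with no change of variables.
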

\begin{proof}
A direct calculation from \eqref{eq51} and \eqref{eq53} shows that the Jacobian
of $h_i(u_i)$ can be written as
\begin{equation}\label{eq54}
\frac{\partial h_i(u_i)}{\partial u_i}
= \boldsymbol{M}_Z
+ V_i^{-3} \boldsymbol{M}_I(u_i)
+ V_i^{-4} \boldsymbol{M}_P(u_i),
\end{equation}
where $\boldsymbol{M}_Z$ corresponds to the constant impedance:
\begin{equation}\label{eq55}
\boldsymbol{M}_Z :=
\begin{bmatrix}
Z_p^l & Z_q^l \\
-Z_q^l & Z_p^l
\end{bmatrix},
\end{equation}
$\boldsymbol{M}_I$ corresponds to the constant current:
	\begin{equation*}
        \boldsymbol{M}_I(u_i):=
            \begin{bmatrix}
            I_p^lV_{Qi}^2-I_q^lV_{Di}V_{Qi}
            &
            -I_p^lV_{Di}V_{Qi}+I_q^lV_{Di}^2
            \\
            -I_p^lV_{Di}V_{Qi}-I_q^lV_{Qi}^2
            &
            I_p^lV_{Di}^2+I_q^lV_{Di}V_{Qi}
            \end{bmatrix},
	\end{equation*}
and $\boldsymbol{M}_P$ corresponds to the constant power:
	\begin{equation*}
		\boldsymbol{M}_P(u_i):=\begin{bmatrix}
            aP_0^l-bQ_0^l & -bP_0^l+cQ_0^l\\
            -bP_0^l+cQ_0^l & cP_0^l+bQ_0^l
            \end{bmatrix},
	\end{equation*}
    where $a=V_{Qi}^2-V_{Di}^2$, $b=2V_{Qi}V_{Di}$, and $c=V_{Di}^2-V_{Qi}^2$.
Define the domain
\begin{equation}\label{eq58}
\mathcal{D}_i :=
\left\{
u_i \,\middle|\,
\boldsymbol{M}_Z + \boldsymbol{M}_Z^\top
+ \frac{\boldsymbol{M}_I + \boldsymbol{M}_I^\top}{V_i^{3}}
+ \frac{\boldsymbol{M}_P + \boldsymbol{M}_P^\top}{V_i^{4}} \succeq 0
\right\}.
\end{equation}
For all $u_i \in \mathcal{D}_i$, the symmetric part of the Jacobian
is positive semidefinite, which implies that the ZIP load satisfies
$\mathrm{IODP}(0,0;\mathcal{D}_i)$.

Moreover, noting that
\begin{equation}\label{eq59}
\boldsymbol{M}_Z + \boldsymbol{M}_Z^\top =
\begin{bmatrix}
2 Z_p^l & 0 \\
0 & 2 Z_p^l
\end{bmatrix}
\succeq 0,
\end{equation}
it follows that the constant impedance load is globally
$\mathrm{IODP}(0,0;\mathbb{R}^2)$.
\end{proof}

\section{IODP-Based Equilibrium-Set Stability Theory}
This section establishes how device-level IODP properties lead to system-level equilibrium-set stability for the interconnected DAE model. We first state decentralized stability conditions as local bus-level requirements and then prove that these conditions guarantee asymptotic stability of the equilibrium set.

\subsection{Decentralized Stability Conditions}
This subsection states decentralized stability requirements as bus-level IODP conditions.
\begin{condition}\label{c1}
Each dynamic bus \eqref{eq1} satisfies $\mathrm{IODP}$\( (0,0;\mathcal{D}_i) \) for a domain \( \mathcal{D}_i \subset \mathbb{R}^{n_i} \times \mathbb{R}^2 \) with a storage function $S_i(x_i,u_i)$.
\end{condition}

\begin{condition} \label{c2}
Each static bus \eqref{eq2} satisfies $\mathrm{IODP}$\( (0,0;\mathcal{D}_i) \) for a domain \( \mathcal{D}_i \subset \mathbb{R}^2 \).
\end{condition}

To connect these local conditions with the forthcoming global proof, we define several aggregate sets for the interconnected model. Let the algebraic manifold be
\begin{equation}\label{eq9}
    \mathbf{G} := \big\{(x,u) \in \mathbb{R}^n \times \mathbb{R}^m \mid 0 = g(x,u)\big\}.
\end{equation}
Let \( \mathcal{D}:= \mathcal{D}_1 \times \cdots \times \mathcal{D}_N \subset \mathbb{R}^n \times \mathbb{R}^m \) denote the Cartesian product of local IODP domains, and define \( \mathcal{D}_G := \mathcal{D} \cap \mathbf{G} \). Further define
\begin{equation}
    \label{eq13}
    \mathcal{D}_y\coloneqq \big\{y\in\mathbb{R}^m \mid y=h(x,u),(x,u)\in\mathcal{D}_G \big\},
\end{equation}
which is the image set of \(\mathcal{D}_G\) under \(h(x,u)\).

Condition \ref{c1} and \ref{c2} require dynamic and static buses satisfying IODP locally in \(\mathcal{D}_i\) with critical IODP indices. This enables a decentralized verification process: each bus checks its own local condition in \(\mathcal{D}_i\), and the system-level claim is then obtained from their interconnection structure. For a particular operating scenario, the remaining certification step is to compute its operating equilibrium and verify that \((x_i^*,u_i^*) \in \mathcal{D}_i\) for each dynamic bus and \(u_i^* \in \mathcal{D}_i\) for each static bus, equivalently \((x^*,u^*) \in \mathcal{D}_G\).

\subsection{From Local IODP to Equilibrium-Set Stability}
With the decentralized conditions established above, this subsection proves equilibrium-set stability for the interconnected system.
The following lemma shows that the summation of IODP storage functions can serve as a $\mathcal{W}$-function candidate \cite{b11}, which will be used to prove equilibrium-set stability.
\begin{lemma}\label{le:1}
Consider the interconnected power system \eqref{eq5}. If Condition \ref{c1} and \ref{c2} are satisfied, then for the function \( S(x,u) := \sum_{i\in\mathcal{V}_1} S_i(x_i,u_i) \), there exist three \( \mathcal{K} \)-class functions \( \alpha, \beta, \) and \( \gamma \) such that for any \( (x,u) \in \mathcal{D}_G \):
\begin{enumerate}
    \item \( \alpha(\|f(x,u)\|) \leq S(x,u) \leq \beta(\|f(x,u)\|); \)
    \item \( \dot{S} \leq -\gamma(\|f(x,u)\|). \)
\end{enumerate}
\end{lemma}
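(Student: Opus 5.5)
The plan is to sum the local IODP inequalities over all buses and then use the network's IODP property to cancel the cross terms. Throughout we work on $\mathcal{D}_G$.

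\emph{Bounds on $S$.} Condition \ref{c1} gives, for each $i\in\mathcal{V}_1$, class-$\mathcal{K}$ functions $\alpha_i,\beta_i,\gamma_i$ with $\alpha_i(\|f_i\|)\le S_i\le\beta_i(\|f_i\|)$ on $\mathcal{D}_i$. We may shrink or enlarge these to common functions $\underline\alpha=\min_i\alpha_i$, $\bar\beta=\max_i\beta_i$ and $\underline\gamma=\min_i\gamma_i$, which are still of class $\mathcal{K}$. Since $\|f\|^2=\sum_i\|f_i\|^2$, we have $\max_i\|f_i\|\ge\|f\|/\sqrt{|\mathcal{V}_1|}$ and $\|f_i\|\le\|f\|$. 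Hence
\[
S\ge\underline\alpha\big(\|f\|/\sqrt{|\mathcal{V}_1|}\big)=:\alpha(\|f\|),\qquad S\le|\mathcal{V}_1|\,\bar\beta(\|f\|)=:\beta(\|f\|),
\]
which proves item 1).

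\emph{Dissipation inequality.} Along solutions, $\dot S=\sum_{i\in\mathcal{V}_1}\big(\tfrac{\partial S_i}{\partial x_i}f_i+\tfrac{\partial S_i}{\partial u_i}\dot u_i\big)$. Because each $(x_i,u_i)\in\mathcal{D}_i$, inequality \eqref{eq7} with $\sigma_i=\rho_i=0$ gives
\[
\dot S\le\sum_{i\in\mathcal{V}_1}\big(\dot u_i^{\mathrm T}\dot y_i-\gamma_i(\|f_i\|)\big).
\]
For static buses, Condition \ref{c2} with zero indices says that the symmetric part of $\partial h_i/\partial u_i$ is positive semidefinite on $\mathcal{D}_i$. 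With $\dot y_i=\tfrac{\partial h_i}{\partial u_i}\dot u_i$, this gives $0\le\dot u_i^{\mathrm T}\dot y_i$. Adding these nonnegative terms, we obtain
\[
\dot S\le\sum_{i\in\mathcal{V}}\dot u_i^{\mathrm T}\dot y_i-\sum_{i\in\mathcal{V}_1}\gamma_i(\|f_i\|).
\]

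\emph{The key step: the network term.} Since $u_{\mathrm{net}}=\mathrm{col}(y_i)$ and $y_{\mathrm{net}}=-\mathrm{col}(u_i)$, we have $\sum_i\dot u_i^{\mathrm T}\dot y_i=-\dot y_{\mathrm{net}}^{\mathrm T}\dot u_{\mathrm{net}}$. Solutions stay on $\mathbf{G}$, so $y_{\mathrm{net}}=h_{\mathrm{net}}(u_{\mathrm{net}})$ and $\dot y_{\mathrm{net}}=\tfrac{\partial h_{\mathrm{net}}}{\partial u_{\mathrm{net}}}\dot u_{\mathrm{net}}$. Proposition \ref{p1} states that the network is $\mathrm{IODP}(0,0;\mathbb{R}^m)$, that is, $\dot u_{\mathrm{net}}^{\mathrm T}\dot y_{\mathrm{net}}\ge0$. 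Therefore $\sum_i\dot u_i^{\mathrm T}\dot y_i\le0$. (Concretely, this amounts to $\dot V^{\mathrm T}\dot I\ge0$ through a linear map with positive semidefinite symmetric part $\boldsymbol G$.)

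It follows that $\dot S\le-\sum_{i\in\mathcal{V}_1}\gamma_i(\|f_i\|)\le-\underline\gamma(\max_i\|f_i\|)\le-\underline\gamma\big(\|f\|/\sqrt{|\mathcal{V}_1|}\big)=:-\gamma(\|f\|)$, which proves item 2).

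\emph{Expected obstacle.} The delicate point is justifying that $\dot u$ is well defined and consistent with the algebraic constraint. This requires the DAE to be index-one, i.e. $\partial g/\partial u$ nonsingular on $\mathcal{D}_G$, so that $\dot u=-(\partial g/\partial u)^{-1}(\partial g/\partial x)f$. It also requires keeping careful track of the sign conventions that link $u_{\mathrm{net}},y_{\mathrm{net}}$ to $u_i,y_i$. If index-one regularity is not already implicit, I will state it as a standing assumption used to differentiate along the manifold $\mathbf{G}$.
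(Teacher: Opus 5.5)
Your proof is correct and follows the paper's argument: you sum the dynamic-bus IODP inequalities with zero indices, add the nonnegative static-bus terms, and use Proposition~\ref{p1} together with $u_{\mathrm{net}}=\mathrm{col}(y_i)$, $y_{\mathrm{net}}=-\mathrm{col}(u_i)$ to make the total port term nonpositive. The only difference is how you build the aggregate comparison functions: you use $\min_i\alpha_i$, $\max_i\beta_i$, $\min_i\gamma_i$ and the bound $\max_i\|f_i\|\ge\|f\|/\sqrt{|\mathcal{V}_1|}$, whereas the paper minimizes over the sphere in the nonnegative orthant; your version makes the class-$\mathcal{K}$ property immediate.
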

\begin{proof}
By Condition \ref{c1}, there exist \( \mathcal{K} \)-class functions \( \alpha_i \) and \( \beta_i \) for \( i\in\mathcal{V}_1\), such that \( \forall(x_i, u_i) \in \mathcal{D}_i \), the following holds: 
\[\sum_{i\in\mathcal{V}_1}\alpha_i(\|f_i\|)\leq S(x,u)=\sum_{i\in\mathcal{V}_1}S_i(x_i,u_i)\leq \sum_{i\in\mathcal{V}_1}\beta_i(\|f_i\|).\] 
For any \(r\in[0,\infty)\), define the set
\[\mathcal{D}(r):=\big\{(r_1,\dots,r_{|\mathcal{V}_1|})\in[0,\infty)^{|\mathcal{V}_1|}| \|(r_1,\dots,r_{|\mathcal{V}_1|})\|=r\big\}\] 
and functions
\[\alpha(r):=\underset{(r_1,...,r_{|\mathcal{V}_1|})\in \mathcal{D}(r)}{\mathrm{min}} (\alpha_1(r_1)+...+\alpha_{|\mathcal{V}_1|}(r_{|\mathcal{V}_1|}))\] 
\[\gamma(r):=\underset{(r_1,...,r_{|\mathcal{V}_1|})\in \mathcal{D}(r)}{\mathrm{min}} (\gamma_1(r_1)+...+\gamma_{|\mathcal{V}_1|}(r_{|\mathcal{V}_1|}))\] 
\[\beta(r):=\beta_1(r)+\dots+\beta_{|\mathcal{V}_1|}(r).\] 
Since \( \alpha_i \), \( \beta_i \), and \( \gamma_i \) are all  \( \mathcal{K} \)-class functions, it is easy to show that \( \alpha(r) \), \( \beta(r) \), and \( \gamma \) are also \(\mathcal{K} \)-class functions on \( [0, \infty) \). 
It holds that
\[\alpha(\|f\|)\leq\sum_{i\in\mathcal{V}_1}\alpha_i(\|f_i\|),\;\;  \gamma(\|f\|)\leq\sum_{i\in\mathcal{V}_1}\gamma_i(\|f_i\|)\]
and
\[\sum_{i\in\mathcal{V}_1}\beta_i(\|f_i\|)\leq\sum_{i\in\mathcal{V}_1}\beta_i(\|f\|)=\beta(\|f\|).\]
Thus, for any \((x,u)\in \mathcal{D}_G\), we have \(\alpha(\|f(x,u)\|)\leq S(x,u)\leq \beta(\|f(x,u)\|)\). 
It follows from Definition \ref{d1} that 
\[\dot{S}(x,u)=\sum_{i\in\mathcal{V}_1}\dot{S}_i(x_i,u_i)\leq \sum_{i\in\mathcal{V}_1}\dot{y}_i^\mathrm{T}\dot{u}_i-\sum_{i\in\mathcal{V}_1}\gamma_i(\|f_i(x_i,u_i)\|).\]
By Condition \ref{c2}, for $i\in\mathcal{V}_2$ we have $\dot{y}_i^\mathrm{T}\dot{u}_i\geq0$. Hence
\[\dot{S}(x,u)\leq \sum_{i\in\mathcal{V}}\dot{y}_i^\mathrm{T}\dot{u}_i-\sum_{i\in\mathcal{V}_1}\gamma_i(\|f_i(x_i,u_i)\|).\] 
Note that $u_{net}=y$, $y_{net}=-u$. Hence, it follows from Proposition \ref{p1} that \(\sum_{i\in\mathcal{V}}\dot{y}_i^\mathrm{T}\dot{u}_i=\dot{y}^\mathrm{T}\dot{u}=-\dot{y}_{net}^\mathrm{T}\dot{u}_{net}\leq0\). This leads to \( \dot{S} \leq -\gamma(\|f(x,u)\|)\).
\end{proof}

\begin{assumption} \label{as1}
The equilibrium set $\mathscr{E}$ in Definition~\ref{d4} is non-empty and bounded.
\end{assumption}
\begin{assumption} \label{as2}The power system \eqref{eq5} is algebraically non-singular within \( \overline{\mathcal{D}} \), i.e.,
\begin{equation}\label{eq11}
    \det\left(\frac{\partial g(x,u)}{\partial u}\right) \neq 0, \quad \forall (x,u) \in \overline{\mathcal{D}},
\end{equation}
where \( \overline{\mathcal{D}} \) is the closure of 
\( \mathcal{D} \).
\end{assumption}
\begin{remark}
Assumption~2 is a standard regularity condition for differential-algebraic systems and is widely adopted in power system stability analysis \cite{b70,b71,b72}. 
It guarantees, via the implicit function theorem, the local solvability of the algebraic equations $g(x,u)=0$ with respect to $u$, and implies that system \eqref{eq5} is an index-one DAE. Consequently, the system admits locally unique solution trajectories and a well-defined equilibrium set $\mathscr{E}$, making the stability analysis with respect to $\mathscr{E}$ mathematically well posed. Violation of this condition leads to singularity of the network algebraic equations. Solutions approaching the impasse surface are associated with short-term voltage instability in power systems, which is beyond the scope of this paper.
\end{remark}

The following theorem connects IODP to the equilibrium-set stability of the interconnected power system.
\begin{theorem}\label{th1}
Consider the power system \eqref{eq5} satisfying Assumption \ref{as1}-\ref{as2} and the decentralized Condition \ref{c1}-\ref{c2}. Then:
\begin{enumerate}
    \item The equilibrium set \( \mathscr{E} \) is asymptotically stable.
    \item If there exists an isolated equilibrium point within \( \mathscr{E} \), then every isolated equilibrium point within \( \mathscr{E} \) is asymptotically stable.
\end{enumerate}
\end{theorem}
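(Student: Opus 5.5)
Our plan is to reduce the index-one DAE \eqref{eq5} to an ODE on the algebraic manifold and treat $S$ from Lemma~\ref{le:1} as a Lyapunov-like function measuring distance to equilibrium. By Assumption~\ref{as2} and the implicit function theorem, on a neighborhood of $\mathcal{D}_G$ the constraint $g(x,u)=0$ defines a locally unique $C^1$ map $u=\phi(x)$. Trajectories starting in $\mathcal{D}_G$ therefore obey $\dot x=F(x):=f(x,\phi(x))$, with $\dot u=-\big(\tfrac{\partial g}{\partial u}\big)^{-1}\tfrac{\partial g}{\partial x}F(x)$. In particular $\dot u$ is bounded by a constant times $\|f\|$ on compact subsets of $\mathcal{D}_G$, and $f(x,u)=0$ on $\mathbf{G}$ forces $\dot u=0$. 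Consequently $S(x,u)=0$ holds exactly on $\mathscr{E}$, by item~1 of Lemma~\ref{le:1}.

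\textbf{Step 1 (stability).} Fix $\varepsilon>0$ and, using boundedness of $\mathscr{E}$ (Assumption~\ref{as1}), take a compact neighborhood $K_\varepsilon=\{(x,u)\in\mathbf{G}:\mathrm{dist}((x,u),\mathscr{E})\le\varepsilon\}$. We shrink $\varepsilon$ so that $K_\varepsilon\subset\mathcal{D}_G$; this presumes $\mathscr{E}\subset\mathrm{Int}(\mathcal{D})$ relative to $\mathbf{G}$, and we will argue it or add it as part of the domain construction. On the compact set $\{(x,u)\in\mathbf{G}:\mathrm{dist}=\varepsilon\}$, the quantity $\|f\|$ is positive and continuous (its zeros on $\mathbf{G}\cap\mathcal{D}$ lie in $\mathscr{E}$), so it has a positive minimum $c$. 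Hence $S\ge\alpha(c)>0$ on the boundary. Choose $\delta$ such that $\mathrm{dist}<\delta$ implies $S<\alpha(c)$; this uses continuity of $S$ and $S=0$ on $\mathscr{E}$. Since $\dot S\le0$ by item~2 of Lemma~\ref{le:1}, the trajectory can never reach the boundary, which gives $\mathrm{dist}<\varepsilon$ for all $t$. Forward completeness then follows because the solution stays in a compact set on which $\phi$ is defined.

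\textbf{Step 2 (attractivity).} Along such a trajectory $S$ is nonincreasing and bounded below, and $\int_0^\infty\gamma(\|f\|)\,dt\le S(0)<\infty$. Moreover $\|f\|$ is uniformly continuous in $t$: its derivative is bounded on the compact set, because $\dot x$ and $\dot u$ are bounded. By Barbalat's lemma $\gamma(\|f\|)\to0$, hence $\|f\|\to0$. The $\omega$-limit set is nonempty, compact and contained in $K_\varepsilon$, and on it $f=0$ and $g=0$, so it lies in $\mathscr{E}$. This yields $\mathrm{dist}\to0$. As an alternative, $S\le\beta(\|f\|)$ gives $S\to0$ directly, and compactness converts $S\to0$ into $\mathrm{dist}\to0$.

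\textbf{Step 3 (isolated points).} Let $(x^*,u^*)\in\mathscr{E}$ be isolated, and choose a ball $B$ around it whose closure meets $\mathscr{E}$ only at that point. Repeat Steps~1--2 with $\mathscr{E}$ replaced by $\{(x^*,u^*)\}$, working on the compact set $\overline B\cap\mathbf{G}\subset\mathcal{D}_G$. On $\partial B\cap\mathbf{G}$ we again have $\|f\|>0$, and the sublevel-set argument keeps trajectories inside $B$. The $\omega$-limit points are then equilibria in $\overline B$, hence equal to $(x^*,u^*)$.

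The main obstacle is the passage from ``$\|f\|$ small'' to ``close to $\mathscr{E}$'', which is what justifies the uniform positivity of $\|f\|$ on the boundary shells. This needs compactness of those shells inside $\mathcal{D}_G$, which is where Assumption~\ref{as1} and the closure in Assumption~\ref{as2} enter. Care is also needed so that trajectories do not leave $\mathcal{D}$ through its boundary while $S$ is still small. We would handle this by taking sublevel sets of $S$ restricted to the connected component of $K_\varepsilon$, and by using the locally uniform invertibility of $\partial g/\partial u$ on $\overline{\mathcal{D}}$.
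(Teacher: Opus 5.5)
Your proposal is correct and follows essentially the same route as the paper. Both arguments use a Lyapunov/sublevel-set argument for $S$ on $\mathbf{G}$, with the positive minimum of $S$ on the shell $\{\mathrm{dist}((x,u),\mathscr{E})=r\}\subset\mathcal{D}_G$ and continuity of $S$ to pick $\delta$; attractivity comes from $\|f\|\to 0$ together with boundedness of $\mathscr{E}$, and part 2 is a localized repetition. The differences are minor: you derive invariance and $\|f\|\to0$ yourself via Barbalat's lemma and $\omega$-limit sets where the paper cites an external $\mathcal{W}$-function theorem, you use small balls rather than connected sublevel-set branches in part 2, and the presumption you flag (a neighborhood of $\mathscr{E}$ in $\mathbf{G}$ lying inside $\mathcal{D}_G$) is one the paper also makes, tacitly, when it chooses $B_r\subset\mathcal{D}_G$.
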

\begin{proof}

 1). For any \(\epsilon>0\), choose \(r\in (0,\epsilon]\) such that
 \[B_r:=\{(x,u)\in\mathbf{G}|\mathrm{dist}((x,u),\mathscr{E})\leq r\}\subset \mathcal{D}_G.\]
 Let \(a:=\underset{\mathrm{dist}((x,u),\mathscr{E})=r}{\mathrm{min}}S(x,u)\). By Definition \ref{d1}, we have \(a>0\). For any \(b\in (0,a)\), consider the sublevel set \(S_b^{-1}=\{(x,u)\in\mathbf{G}|S(x,u)\leq b\}\) \footnote{The sublevel set \(S_b^{-1}\) in \(\mathbf{G}\) may contain multiple mutually disconnected branches, and only those branches that intersect with \(\mathcal{D}_G\) are considered. According to Assumption 1, such branches exist.}.   We have \(\mathscr{E}\subset S_b^{-1}\) and \(S_b^{-1}\) must be in \(B_r\) , namely \(S_b^{-1}\subset \mathrm{Int}(B_r)\). Otherwise, the boundary \(\partial B_r\) of \(B_r\) would intersect with \(S_b^{-1}\) at a point \((\hat{x},\hat{u})\). At this point, \(S(\hat{x},\hat{u})\geq a>b\), which contradicts \((\hat{x},\hat{u})\in S_b^{-1}\). On the other hand, by the continuity of \(S(x,u)\) and the fact that \(S(x,u)=0,\forall(x,u)\in\mathscr{E}\), it follows that there exists \(\delta>0\) such that
 \(\mathrm{dist}((x,u),\mathscr{E})\leq \delta \Rightarrow S(x,u)<b\)
 Hence \(\mathscr{E}\subset B_{\delta}\subset S_b^{-1}\subset \mathrm{Int}(B_r)\subset \mathcal{D}_G\). Then by \cite[Theorem 6]{b11}, \(S_b^{-1}\) is positively invariant and is an estimation of the \(f-\mathrm{RoA}\). That is \((x(0),u(0))\in S_b^{-1}\Rightarrow(x(t),u(t))\in S_b^{-1},\forall t\geq 0\) , and \(f(x(t),u(t))\rightarrow 0,t\rightarrow \infty\). Then we have \(\mathrm{dist}((x(0),u(0)),\mathscr{E})<\delta\Rightarrow \mathrm{dist}((x(t),u(t)),\mathscr{E})<r\leq \epsilon,\forall t\geq 0\). By Assumption 1, \(\mathscr{E}\) is bounded. Therefore, \(f(x(t),u(t))\rightarrow 0\Rightarrow \mathrm{dist}((x(t),u(t)),\mathscr{E})\rightarrow 0\). Thus, \(\mathrm{dist}((x(0),u(0)),\mathscr{E})<\delta\Rightarrow \mathrm{dist}((x(t),u(t)),\mathscr{E})\rightarrow0,t\rightarrow\infty\). As a result, \(\mathscr{E}\) is asymptotically stable.

 2). Let \((x^*,u^*)\in \mathscr{E}\) be an isolated equilibrium point. Due to the isolation of \((x^*,u^*)\), there exists a sufficiently small \(b>0\) such that a connected branch of the sublevel set \(S_b^{-1}\) contains \((x^*,u^*)\) and no other equilibrium points. Therefore, by the same reasoning as in the proof of 1), it follows that \((x^*,u^*)\) is asymptotically stable.
 \end{proof}
 

\section{IODP-Based Passivation Control Design}
In practical power systems, some dynamic subsystems may not naturally satisfy the IODP levels required by the decentralized stability conditions. This IODP shortage can prevent direct certification of system-level stability. To address this issue, this section proposes an IODP-based passivation control, which enables grid-connected components to achieve critical or even excessive IODP. We remark that the proposed control law is not restricted to two-dimensional input-output ports. It applies to general nonlinear subsystems with $u_i, y_i \in \mathbb{R}^m$. For clarity of notation, the derivation below is presented for the subsystem in \eqref{eq1} with $m=2$. The extension to general $m$ follows directly by replacing the corresponding dimensions from $2$ to $m$, without changing the core argument.
\subsection{I/O Transformation Passivation Control}
Consider the dynamic bus system \eqref{eq1}. Assume that there exist $\sigma_i \in \mathbb{R}$, $\rho_i \in \mathbb{R}$, and $\mathcal{D}_i \subset \mathbb{R}^{n_i} \times \mathbb{R}^2$ such that the system is $\text{IODP}(\sigma_i, \rho_i; \mathcal{D}_i)$. Suppose we want the system to be $\text{IODP}(\varsigma_i, \varrho_i; \widetilde{\mathcal{D}}_i)$, with $\varsigma_i > \sigma_i$ and $\varrho_i > \rho_i$, where $\widetilde{\mathcal{D}}_i$ is a new IODP region after controlling. 

For any invertible matrix $\boldsymbol{T} \in \mathbb{R}^{4 \times 4}$ and any constant vectors $u_{ic} \in \mathbb{R}^2$ and $y_{ic} \in \mathbb{R}^2$, consider a locally compensated subsystem whose external port variables $(\tilde{u}_i,\tilde{y}_i)$ are related to the original $(u_i,y_i)$ through the following affine transformation: 
\begin{equation}\label{eq16}
 \begin{bmatrix}
     \tilde{u}_i\\ \tilde{y}_i
 \end{bmatrix}\coloneqq \boldsymbol{T}\begin{bmatrix}
     u_i\\y_i
 \end{bmatrix}+\begin{bmatrix}
     u_{ic}\\y_{ic}
 \end{bmatrix} 
\end{equation}
The I/O transformation can be regarded as an extended input feedforward and output feedback control. Let the $2 \times 2$ block matrix of $\boldsymbol{T}$ be
\begin{equation}
    \label{eq17}\boldsymbol{T}=\begin{bmatrix}
        \boldsymbol{A}&\boldsymbol{B}\\\boldsymbol{C}&\boldsymbol{D}
    \end{bmatrix}
\end{equation}where $\boldsymbol{A} \in \mathbb{R}^{2 \times 2}$, $\boldsymbol{B} \in \mathbb{R}^{2 \times 2}$, $\boldsymbol{C} \in \mathbb{R}^{2 \times 2}$, and $\boldsymbol{D} \in \mathbb{R}^{2 \times 2}$. For the inverse matrix $\boldsymbol{T}^{-1}$ of $\boldsymbol{T}$, we also consider its $2 \times 2$ block matrix form, denoted as
\begin{equation}\label{eq18}
\boldsymbol{T}^{-1} = \begin{bmatrix}
\boldsymbol{E} & \boldsymbol{F} \\
\boldsymbol{G} & \boldsymbol{H}
\end{bmatrix},
\end{equation}
where $\boldsymbol{E} \in \mathbb{R}^{2 \times 2}$, $\boldsymbol{F} \in \mathbb{R}^{2 \times 2}$, $\boldsymbol{G} \in \mathbb{R}^{2 \times 2}$, and $\boldsymbol{H} \in \mathbb{R}^{2 \times 2}$.

Then, from \eqref{eq16}, the feedforward-feedback compensation can be written as follows.
\begin{equation}\label{eq19}
\begin{cases}
    u_i = \boldsymbol{E} \left( \tilde{u}_i - u_{ic} \right) + \boldsymbol{FC}u_i + \boldsymbol{FD}y_i \\
\tilde{y}_i = \boldsymbol{C}u_i + \boldsymbol{D}y_i + y_{ic}
\end{cases}
\end{equation}
In this paper, \eqref{eq19} is referred to as the I/O transformation control law, and its corresponding control block diagram is shown in Fig. \ref{fg7}. \(u_{ic}\) and \(y_{ic}\) are the bias terms of the input and output, which can be selected according to the desired steady-state relationship.
\begin{figure}[htb]
    \centering
    \includegraphics[width=0.98\columnwidth]{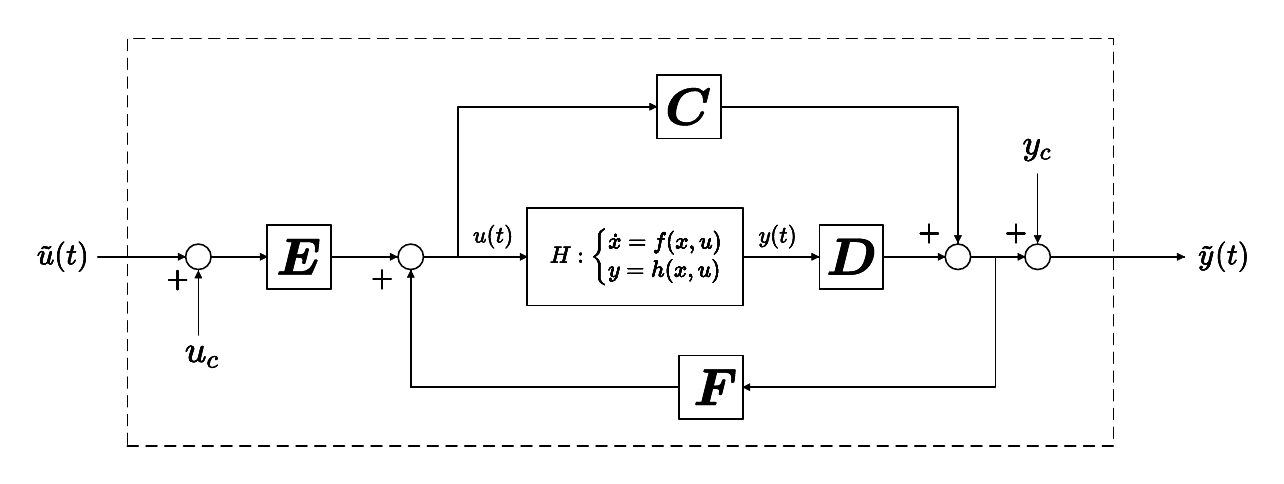}
    \caption{The block diagram of I/O transformation control.}
    \label{fg7}
\end{figure}

To make the I/O transformation control \eqref{eq19} well-defined, the following assumption is required.
\begin{assumption}\label{as3}
    Consider system \eqref{eq1} and the I/O transformation control \eqref{eq19}. For any $x_i \in \mathbb{R}^{n_i}$ and $\tilde{u}_i \in \mathbb{R}^2$, there exists a unique $u_i \in \mathbb{R}^2$ such that
    \begin{equation}\label{eq20}
        u_i =\boldsymbol{E} \left( \tilde{u}_i - u_{ic} \right) + \boldsymbol{FC}u_i + \boldsymbol{FD}h_i(x_i, u_i)
    \end{equation}
\end{assumption}
When Assumption \ref{as3} holds, \eqref{eq20} defines a function from $x_i, \tilde{u}_i$ to $u_i$, denoted as $u_i := v(x_i, \tilde{u}_i)$. Define the function
\begin{equation}
\tilde{f}_i(x_i, \tilde{u}_i) := f_i(x_i, v(x_i, \tilde{u}_i))
\label{eq21}
\end{equation}
\begin{equation}
\tilde{h}_i(x_i, \tilde{u}_i) := \boldsymbol{C}v(x_i, \tilde{u}_i) + \boldsymbol{D}h_i(x_i, v(x_i, \tilde{u}_i)) + y_{ic}
\label{eq22}
\end{equation}
Then, applying the control \eqref{eq19} to system \eqref{eq1}, we have
\begin{equation}
\begin{cases}
\dot{x}_i = \tilde{f}_i(x_i, \tilde{u}_i) \\
\tilde{y}_i = \tilde{h}_i(x_i, \tilde{u}_i)
\end{cases}
\label{eq23}
\end{equation}
Associated with the original IODP domain \(\mathcal{D}_i\), define the induced domain of the controlled system in the \((x_i,\tilde{u}_i)\)-coordinates as
\[
\widetilde{\mathcal{D}}_i := \big\{(x_i,\tilde{u}_i)\in\mathbb{R}^{n_i}\times\mathbb{R}^2\mid (x_i, v(x_i,\tilde{u}_i))\in\mathcal{D}_i \big\}.
\]

The following theorem shows that by properly choosing the matrix $\boldsymbol{T}$, the controlled system \eqref{eq23} can achieve prescribed IODP indices.
\begin{theorem}
\label{th4}
For given $\varsigma_i \in \mathbb{R}$, $\varrho_i \in \mathbb{R}$, $u_{ic} \in \mathbb{R}^2$, and $y_{ic} \in \mathbb{R}^2$, if Assumption \ref{as3} holds and there exists an invertible matrix $\boldsymbol{T} \in \mathbb{R}^{4 \times 4}$ such that
\begin{equation}
\boldsymbol{T}^\mathrm{T} \begin{bmatrix} -\varsigma_i \boldsymbol{I} & \frac{1}{2} \boldsymbol{I} \\ \frac{1}{2} \boldsymbol{I} & -\varrho_i \boldsymbol{I} \end{bmatrix} \boldsymbol{T} = \begin{bmatrix} -\sigma_i \boldsymbol{I} & \frac{1}{2} \boldsymbol{I} \\ \frac{1}{2} \boldsymbol{I} & -\rho_i \boldsymbol{I} \end{bmatrix}
\label{eq24}
\end{equation}
then the controlled system \eqref{eq23} satisfies $\text{IODP}(\varsigma_i, \varrho_i; \widetilde{\mathcal{D}}_i)$ with $\tilde{u}_i$ as the input and $\tilde{y}_i$ as the output.
\end{theorem}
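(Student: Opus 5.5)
The plan is to reuse the original storage function $S_i$ unchanged and exploit the fact that the supply rate in \eqref{eq7} is a quadratic form in the port differentials $(\dot u_i,\dot y_i)$. Define the candidate storage function of the controlled system \eqref{eq23} in the new coordinates by
\[
\widetilde S_i(x_i,\tilde u_i):=S_i\big(x_i,v(x_i,\tilde u_i)\big),
\]
where $v$ is the map defined by \eqref{eq20} under Assumption \ref{as3}. The first requirement of Definition \ref{d1} is then immediate. For $(x_i,\tilde u_i)\in\widetilde{\mathcal D}_i$ we have $(x_i,v)\in\mathcal D_i$ and $\tilde f_i(x_i,\tilde u_i)=f_i(x_i,v)$ by \eqref{eq21}. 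Hence the same $\mathcal K$-class bounds $\alpha,\beta$ (and later $\gamma$) carry over verbatim.

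For the dissipation inequality, I will differentiate along trajectories. Fix $(x_i,\tilde u_i)\in\widetilde{\mathcal D}_i$ and an arbitrary $\dot{\tilde u}_i$, and let $\dot u_i$ be the differential of $u_i=v(x_i,\tilde u_i)$ along $\dot x_i=\tilde f_i$. By the chain rule, $\dot{\widetilde S}_i$ equals $\frac{\partial S_i}{\partial x_i}f_i+\frac{\partial S_i}{\partial u_i}\dot u_i$ evaluated at $(x_i,u_i)\in\mathcal D_i$. Since \eqref{eq7} holds for every $\dot u_i\in\mathbb R^2$, it holds in particular for this one. This gives
\[
\dot{\widetilde S}_i\le \begin{bmatrix}\dot u_i\\ \dot y_i\end{bmatrix}^{\mathrm T}\Pi(\sigma_i,\rho_i)\begin{bmatrix}\dot u_i\\ \dot y_i\end{bmatrix}-\gamma(\|\tilde f_i\|),
\]
where $\Pi(\cdot,\cdot)$ denotes the $4\times4$ supply matrix. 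Differentiating the affine relation \eqref{eq16} (the biases $u_{ic},y_{ic}$ are constant) gives $\mathrm{col}(\dot{\tilde u}_i,\dot{\tilde y}_i)=\boldsymbol T\,\mathrm{col}(\dot u_i,\dot y_i)$. Substituting \eqref{eq24}, the quadratic form becomes $\mathrm{col}(\dot{\tilde u}_i,\dot{\tilde y}_i)^{\mathrm T}\Pi(\varsigma_i,\varrho_i)\,\mathrm{col}(\dot{\tilde u}_i,\dot{\tilde y}_i)$, which is exactly \eqref{eq7} for the new indices.

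The main obstacle is justifying the differentiation step, specifically that $v$ is $C^1$ so that $\widetilde S_i$ is continuously differentiable and $\dot u_i$ is well defined. It must also be checked that the derivative $\dot{\tilde y}_i$ appearing in Definition \ref{d1}, namely $\frac{\partial\tilde h_i}{\partial x_i}\tilde f_i+\frac{\partial\tilde h_i}{\partial\tilde u_i}\dot{\tilde u}_i$, coincides with $\boldsymbol C\dot u_i+\boldsymbol D\dot y_i$. Here $\dot y_i$ is formed as in Definition \ref{d1} from $h_i$.

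To handle this, I would first try the implicit function theorem applied to $\Phi(u_i;x_i,\tilde u_i):=u_i-\boldsymbol E(\tilde u_i-u_{ic})-\boldsymbol{FC}u_i-\boldsymbol{FD}h_i(x_i,u_i)$. This requires $\boldsymbol I-\boldsymbol{FC}-\boldsymbol{FD}\frac{\partial h_i}{\partial u_i}$ to be nonsingular. If that is not automatic from Assumption \ref{as3}, I would state it as the interpretation of uniqueness, namely a regular solution, mirroring Assumption \ref{as2}. Once $v\in C^1$, the identity $\dot{\tilde y}_i=\boldsymbol C\dot u_i+\boldsymbol D\dot y_i$ follows by differentiating \eqref{eq22} with the chain rule. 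The identity $\dot{\tilde u}_i=\boldsymbol A\dot u_i+\boldsymbol B\dot y_i$ follows by differentiating the defining relation \eqref{eq20} and using $\boldsymbol T^{-1}\boldsymbol T=\boldsymbol I$. Together these give the linear transformation of the port differentials by $\boldsymbol T$ used above and complete the argument.
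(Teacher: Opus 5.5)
Your proposal is correct and matches the paper's proof. Both use the pulled-back storage function $\tilde S_i(x_i,\tilde u_i)=S_i(x_i,v(x_i,\tilde u_i))$, transfer $\alpha,\beta,\gamma$ through $\tilde f_i(x_i,\tilde u_i)=f_i(x_i,v(x_i,\tilde u_i))$, and apply the congruence \eqref{eq24} to $\mathrm{col}(\dot{\tilde u}_i,\dot{\tilde y}_i)=\boldsymbol{T}\,\mathrm{col}(\dot u_i,\dot y_i)$.

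Your added regularity condition, nonsingularity of $\boldsymbol{I}-\boldsymbol{FC}-\boldsymbol{FD}\,\partial h_i/\partial u_i$ (so that $v$ is $C^1$), is something the paper assumes without stating it. Because $\boldsymbol{T}^{-1}\boldsymbol{T}=\boldsymbol{I}$ lets this matrix factor as $\boldsymbol{E}(\boldsymbol{A}+\boldsymbol{B}\,\partial h_i/\partial u_i)$, the same condition also makes $\boldsymbol{E}$ invertible, which you use implicitly when cancelling $\boldsymbol{E}$ to get $\dot{\tilde u}_i=\boldsymbol{A}\dot u_i+\boldsymbol{B}\dot y_i$ from the differentiated \eqref{eq20}.
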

\begin{proof}
  Given that the original input-output system \eqref{eq1} satisfies $\mathrm{IODP}(\sigma_i, \rho_i; \mathcal{D}_i)$, by Definition \ref{d1}, there exists a storage function $S_i(x_i, u_i)$ such that for any $(x_i, u_i) \in \mathcal{D}_i$,  
\(\alpha(\|f_i\|) \leq S_i \leq \beta(\|f_i\|)\)
and  
\begin{equation}\label{eq32}
\dot{S}_i(x_i, u_i) \leq \begin{bmatrix} \dot{u}_i \\ \dot{y}_i \end{bmatrix}^\mathrm{T} \begin{bmatrix} -\sigma_i \boldsymbol{I} & \frac{1}{2} \boldsymbol{I} \\ \frac{1}{2} \boldsymbol{I} & -\rho_i \boldsymbol{I} \end{bmatrix} \begin{bmatrix} \dot{u}_i \\ \dot{y}_i \end{bmatrix} - \gamma(\|f_i(x_i,u_i)\|)
\end{equation}  
Define the storage function of system \eqref{eq23} as:  
\begin{equation*}
\tilde{S}_i(x_i, \tilde{u}_i) := S_i(x_i, v(x_i, \tilde{u}_i))
\end{equation*}
For any \((x_i, \tilde{u}_i) \in \widetilde{\mathcal{D}}_i\), equivalently \((x_i, v(x_i, \tilde{u}_i)) \in \mathcal{D}_i\), we have
\begin{equation*}
\alpha(\|f_i(x_i, v(x_i, \tilde{u}_i))\|) \leq \tilde{S}_i(x_i, \tilde{u}_i) \leq \beta(\|f_i(x_i, v(x_i, \tilde{u}_i))\|)  
\end{equation*}
Thus, \(\tilde{S}_i\) satisfies 1) in Definition \ref{d1}.

Calculating the derivative of \(\tilde{S}_i\), it follows from \eqref{eq16} and \eqref{eq32} that for any \((x_i, \tilde{u}_i) \in \widetilde{\mathcal{D}}_i\), we have

\begin{equation*}
\begin{aligned}
\dot{\tilde{S}}_i(x_i, \tilde{u}_i) \leq &
\begin{bmatrix} 
\dot{u}_i \\ \dot{y}_i 
\end{bmatrix}^\mathrm{T} \boldsymbol{T}^{-\mathrm{T}} 
\begin{bmatrix} -\sigma_i \boldsymbol{I} & \frac{1}{2} \boldsymbol{I} \\ \frac{1}{2} \boldsymbol{I} & -\rho_i \boldsymbol{I} \end{bmatrix}
\boldsymbol{T}^{-1} \begin{bmatrix} \dot{u}_i \\ \dot{y}_i \end{bmatrix}\\
&- \gamma(\|f_i(x_i, v(x_i, \tilde{u}_i))\|) \\
= & \begin{bmatrix} \dot{\tilde{u}}_i \\ \dot{\tilde{y}}_i \end{bmatrix}^\mathrm{T} \begin{bmatrix} -\varsigma_i \boldsymbol{I} & \frac{1}{2} \boldsymbol{I} \\ \frac{1}{2} \boldsymbol{I} & -\varrho_i \boldsymbol{I} \end{bmatrix} 
\begin{bmatrix} \dot{\tilde{u}}_i \\ \dot{\tilde{y}}_i 
\end{bmatrix} - \gamma(\|\tilde{f}_i(x_i, \tilde{u}_i)\|) 
\end{aligned}
\end{equation*}

Thus, \(\tilde{S}_i\) satisfies 2) in Definition \ref{d1}, and therefore the controlled system \eqref{eq23} satisfies \(\mathrm{IODP}(\varsigma_i, \varrho_i; \widetilde{\mathcal{D}}_i)\).
\end{proof}

Theorem \ref{th4} shows that the proposed controller performs passivity shaping at the differential level: if an invertible matrix $\boldsymbol{T}$ satisfies \eqref{eq24} and Assumption \ref{as3} holds, then the controller \eqref{eq19} reshapes the original IODP level $(\sigma_i, \rho_i)$ of subsystem \eqref{eq1} into the target level $(\varsigma_i, \varrho_i)$ for the compensated closed-loop subsystem \eqref{eq23} on the induced domain \(\widetilde{\mathcal{D}}_i\). The bias terms $u_{ic}$ and $y_{ic}$ can be used to compensate constant offsets induced by the I/O transformation so that the desired steady-state operating point is preserved.

From an engineering viewpoint, $(\tilde{u}_i,\tilde{y}_i)$ are the external bus voltage and injected current of the compensated device seen by the network, whereas $(u_i,y_i)$ are internal signals of the original device model inside the local controller, as shown in Fig.\ref{fg7}. Hence, the physical meaning of the port variables are unchanged, but the closed-loop voltage-current relationship presented at that port is reshaped by the controller, similar to the widely used virtual-impedance control.

The following proposition presents the equivalent conditions of Assumption \ref{as3} under two special cases.
\begin{proposition}
\label{p2}
Consider system \eqref{eq1}. If \( y_i \) has an affine relationship with \( u_i \), i.e., \( y_i = h_{i1}(x_i) + h_{i2}(x_i)u_i \), then Assumption \ref{as3} holds iff for any \( x_i \in \mathbb{R}^{n_i} \), the matrix \( \boldsymbol{I} - \boldsymbol{FC} - \boldsymbol{FD}h_{i2}(x_i) \) is invertible. If \( y_i \) is independent of \( u_i \), i.e., \( y_i = h_i(x_i) \), then Assumption \ref{as3} holds iff the matrix \( \boldsymbol{I}-\boldsymbol{FC} \) is invertible.
\end{proposition}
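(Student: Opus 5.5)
The plan is to substitute the affine output map into the fixed-point equation \eqref{eq20} and turn it into a linear system in \(u_i\). With \(y_i = h_{i1}(x_i) + h_{i2}(x_i)u_i\), equation \eqref{eq20} becomes
\[
\big(\boldsymbol{I} - \boldsymbol{FC} - \boldsymbol{FD}h_{i2}(x_i)\big)u_i = \boldsymbol{E}(\tilde{u}_i - u_{ic}) + \boldsymbol{FD}h_{i1}(x_i).
\]
Call the coefficient matrix \(\boldsymbol{M}(x_i)\) and the right-hand side \(r(x_i,\tilde{u}_i)\). Assumption \ref{as3} then states that, for every \(x_i\) and every \(\tilde{u}_i\), this square linear system has exactly one solution.

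\emph{Sufficiency.} If \(\boldsymbol{M}(x_i)\) is invertible for every \(x_i\), then \(u_i = \boldsymbol{M}(x_i)^{-1} r(x_i,\tilde{u}_i)\) is the unique solution for each pair \((x_i,\tilde{u}_i)\). Hence Assumption \ref{as3} holds.

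\emph{Necessity.} Suppose \(\boldsymbol{M}(\bar{x}_i)\) is singular for some \(\bar{x}_i\). Then a linear system with coefficient matrix \(\boldsymbol{M}(\bar{x}_i)\) has either no solution or infinitely many, depending on the right-hand side. Either outcome violates the existence-and-uniqueness requirement at \((\bar{x}_i,\tilde{u}_i)\) for any \(\tilde{u}_i\). So it suffices to exhibit one \(\tilde{u}_i\); in fact the argument works for every \(\tilde{u}_i\). The invertibility of \(\boldsymbol{E}\) is therefore not needed, and I will point this out explicitly, since one might otherwise think a surjectivity argument through \(\boldsymbol{E}\) is required.

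\emph{Second case.} When \(y_i = h_i(x_i)\), this is the special case \(h_{i1} = h_i\) and \(h_{i2} \equiv \boldsymbol{0}\). The matrix \(\boldsymbol{M}(x_i)\) reduces to the constant matrix \(\boldsymbol{I} - \boldsymbol{FC}\), and the condition "for all \(x_i\)" becomes simply invertibility of \(\boldsymbol{I} - \boldsymbol{FC}\).

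The main point needing care is the necessity direction. It rests on the elementary fact that a singular square system never has a unique solution, whatever its right-hand side, and applying this fact disposes of that direction.
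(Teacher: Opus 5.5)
Your proposal is correct and follows the paper's proof: substitute the affine output into \eqref{eq20} to get a linear system with coefficient matrix \(\boldsymbol{I} - \boldsymbol{FC} - \boldsymbol{FD}h_{i2}(x_i)\), note that unique solvability for every \((x_i,\tilde{u}_i)\) is equivalent to invertibility of this matrix, and treat the \(y_i = h_i(x_i)\) case as \(h_{i2} \equiv \boldsymbol{0}\). Your necessity argument, that a singular square system has no unique solution for any right-hand side (so invertibility of \(\boldsymbol{E}\) is never needed), makes explicit a step the paper states in one line.
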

\begin{proof}
Substituting the affine relationship into \eqref{eq19}, we obtain
\begin{equation}\label{eq25}
(\boldsymbol{I} - \boldsymbol{FC} - \boldsymbol{FD}h_{i2}(x_i))u_i = \boldsymbol{E}\left( \tilde{u}_i - u_{ic} \right) + \boldsymbol{FD}h_{i1}(x_i) 
\end{equation}
Therefore, Assumption \ref{as3} holds if and only if for any \( x_i \in \mathbb{R}^{n_i} \), the matrix \( \boldsymbol{I} - \boldsymbol{FC} - \boldsymbol{FD}h_{i2}(x_i) \) is invertible. In particular, the case where \( h_i \) is independent of \( u_i \) is equivalent to setting \( h_{i2}(x_i) = 0 \) in the above conclusion. Thus, Assumption \ref{as3} is equivalent to the matrix \( \boldsymbol{I} - \boldsymbol{FC} \) being invertible.
\end{proof}
Proposition \ref{p2} shows that when the function \( h_i \) satisfies a specific form, the originally implicit function-form assumption condition \eqref{eq19} can be transformed into an explicit function form, so that we can directly determine whether Assumption \ref{as3} holds by checking the invertibility of the matrix. 

\subsection{Constructive Solution of the I/O Transformation Matrix}
To enable practical controller synthesis, this subsection provides a constructive procedure for solving the I/O transformation matrix $\boldsymbol{T}$. Each sub-matrix $\boldsymbol{A} \in \mathbb{R}^{2 \times 2}$, $\boldsymbol{B} \in \mathbb{R}^{2 \times 2}$, $\boldsymbol{C} \in \mathbb{R}^{2 \times 2}$, and $\boldsymbol{D} \in \mathbb{R}^{2 \times 2}$ has $4$ undetermined coefficients. Substituting \eqref{eq17} into \eqref{eq24}, \eqref{eq24} is equivalent to the following polynomial equations
\begin{equation}\label{eq26}   
\begin{cases}
-\varsigma_i \boldsymbol{A}^\mathrm{T}\boldsymbol{A} + \frac{1}{2}\boldsymbol{A}^\mathrm{T}\boldsymbol{C} + \frac{1}{2}\boldsymbol{C}^\mathrm{T}\boldsymbol{A} - \varrho_i \boldsymbol{C}^\mathrm{T}\boldsymbol{C} = -\sigma_i \boldsymbol{I} \\
-\varsigma_i \boldsymbol{A}^\mathrm{T}\boldsymbol{B} + \frac{1}{2}\boldsymbol{A}^\mathrm{T}\boldsymbol{D} + \frac{1}{2}\boldsymbol{C}^\mathrm{T}\boldsymbol{B} - \varrho_i \boldsymbol{C}^\mathrm{T}\boldsymbol{D} = \frac{1}{2}\boldsymbol{I} \\
-\varsigma_i \boldsymbol{B}^\mathrm{T}\boldsymbol{B} + \frac{1}{2}\boldsymbol{B}^\mathrm{T}\boldsymbol{D} + \frac{1}{2}\boldsymbol{D}^\mathrm{T}\boldsymbol{B} - \varrho_i \boldsymbol{D}^\mathrm{T}\boldsymbol{D} = -\rho_i \boldsymbol{I}
\end{cases} 
\end{equation}
Equation set \eqref{eq26} contains $12$ scalar equations and $16$ unknowns, so it is underdetermined. A practical strategy is to fix $4$ variables first, which yields a square polynomial system with equal numbers of equations and unknowns. This determined system is multivariate polynomial and hence generally have finitely many roots in the complex field, which can then be computed efficiently using standard algebraic solvers.

In particular, if we consider diagonal matrices, i.e., $\boldsymbol{A} = a\boldsymbol{I}$, $\boldsymbol{B} = b\boldsymbol{I}$, $\boldsymbol{C} = c\boldsymbol{I}$, $\boldsymbol{D} = d\boldsymbol{I}$, where $a, b, c, d \in \mathbb{R}$ are undetermined coefficients, then \eqref{eq26} degenerates to
\begin{equation}\label{eq27}  
\begin{cases}
-\varsigma_i a^2 + ac - \varrho_i c^2 = -\sigma_i \\
-\varsigma_i ab + \frac{1}{2}ad + \frac{1}{2}cb - \varrho_i cd = \frac{1}{2} \\
-\varsigma_i b^2 + bd - \varrho_i d^2 = -\rho_i
\end{cases}
\end{equation}

The system \eqref{eq27} has four unknowns and three equations. Fixing any one of \(a, b, c, d\) reduces it to a determined quadratic system in three variables, which has at most eight complex solutions when multiplicities are counted.

Both \eqref{eq26} and \eqref{eq27} are underdetermined before fixing free variables, and thus admit infinitely many solutions. After fixing the free variables, only finitely many candidates remain. A feasible transformation matrix \(\boldsymbol{T}\) is then obtained by selecting a real solution and checking its invertibility; non-invertible candidates are discarded. If multiple invertible real matrices \(\boldsymbol{T}\) are available, any one can be used.

In general, verifying the existence of real roots is nontrivial. For the special case below, a sufficient condition can be derived.
\begin{proposition}\label{p3}
    When \(\varsigma_i = \varrho_i = 0\), there exist \(a, b, c, d \in \mathbb{R}\) satisfying \eqref{eq27} if \(\sigma_i\rho_i \leqslant \frac{1}{4}\).
\end{proposition}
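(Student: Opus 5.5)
The plan is to reduce \eqref{eq27} with \(\varsigma_i=\varrho_i=0\) to a single quadratic equation and then build \(a,b,c,d\) explicitly. With \(\varsigma_i=\varrho_i=0\), system \eqref{eq27} becomes
\[
ac=-\sigma_i,\qquad ad+bc=1,\qquad bd=-\rho_i .
\]
The key observation is that \(p:=ad\) and \(q:=bc\) satisfy
\[
p+q=1,\qquad pq=(ac)(bd)=\sigma_i\rho_i .
\]
So any real solution forces \(p\) and \(q\) to be roots of \(t^2-t+\sigma_i\rho_i=0\). Conversely, I will start from these roots and reconstruct \(a,b,c,d\). The discriminant of this quadratic is \(1-4\sigma_i\rho_i\), which is nonnegative exactly when \(\sigma_i\rho_i\le \tfrac14\). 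This is where the hypothesis is used. Under it I fix the real roots
\[
p=\tfrac12\bigl(1+\sqrt{1-4\sigma_i\rho_i}\bigr),\qquad q=1-p .
\]

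Next I will exhibit \(a,b,c,d\) with \(ad=p\), \(bc=q\), \(ac=-\sigma_i\) and \(bd=-\rho_i\), splitting into two cases.

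\emph{Case \(\sigma_i\neq 0\).} Take \(a=1\), \(c=-\sigma_i\), \(d=p\) and \(b=q/c=-q/\sigma_i\). Then \(ac=-\sigma_i\) and \(ad+bc=p+q=1\) hold. For the last equation, \(bd=-pq/\sigma_i=-\sigma_i\rho_i/\sigma_i=-\rho_i\).

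\emph{Case \(\sigma_i=0\).} Take \(a=1\), \(c=0\), \(d=1\), \(b=-\rho_i\). All three equations are verified directly, and this case needs no condition on \(\rho_i\).

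Combining the two cases gives real \(a,b,c,d\) satisfying \eqref{eq27} whenever \(\sigma_i\rho_i\le\tfrac14\).

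The only step that needs real care is the first: noticing that the products \(ad\) and \(bc\) decouple into a quadratic whose discriminant is exactly \(1-4\sigma_i\rho_i\). After that the construction is routine. I will also remark on invertibility, which matters for Theorem \ref{th4} even though the proposition only asserts existence. For \(\boldsymbol{T}\) with scalar blocks, \(\det\boldsymbol{T}=(ad-bc)^2=(p-q)^2=1-4\sigma_i\rho_i\). Hence \(\boldsymbol{T}\) is invertible when \(\sigma_i\rho_i<\tfrac14\), while the boundary case \(\sigma_i\rho_i=\tfrac14\) yields real but singular solutions. In the setting emphasized earlier, where one index is driven to zero so that \(\sigma_i\rho_i=0\), we get \(\det\boldsymbol{T}=1\).
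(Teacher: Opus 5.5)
Your proof is correct, and its core coincides with the paper's. Both reduce \eqref{eq27} at \(\varsigma_i=\varrho_i=0\) to the quadratic \(z^2-z+\sigma_i\rho_i=0\) with discriminant \(1-4\sigma_i\rho_i\); the paper's \(z=cb\) is exactly your \(q\).

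What differs is the route to that quadratic. The paper eliminates \(a=-\sigma_i/c\) and \(d=-\rho_i/b\). This requires \(b,c\neq0\), so it forces a four-way split: both indices nonzero, both zero, or exactly one zero. The degenerate cases are handled by separate ad hoc choices, and the generic case stops at the existence of a real root \(z\) without writing out \(a,b,c,d\). Your observation that \(ad+bc=1\) and \((ad)(bc)=(ac)(bd)=\sigma_i\rho_i\) involves no division. As a result:
\begin{itemize}
\item only \(\sigma_i=0\) versus \(\sigma_i\neq0\) needs separating;
\item every quadruple is explicit, and for \(\rho_i=0\) it reproduces the choice \(a=d=1\), \(b=0\), \(c=-\sigma_i\) used in Table~\ref{tab4};
\item the converse becomes transparent: real \(a,b,c,d\) make \(ad\) and \(bc\) real roots, so \(\sigma_i\rho_i\le\tfrac14\) is also necessary.
\end{itemize}

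Your determinant remark is a genuine addition. Since \((ad-bc)^2=(ad+bc)^2-4(ad)(bc)=1-4\sigma_i\rho_i\) for every solution, the boundary case \(\sigma_i\rho_i=\tfrac14\) yields only singular \(\boldsymbol{T}\). More generally, taking determinants in \eqref{eq24} with \(\varsigma_i=\varrho_i=0\) gives \((\det\boldsymbol{T})^2=(1-4\sigma_i\rho_i)^2\) for any \(\boldsymbol{T}\), scalar blocks or not. So for applying Theorem~\ref{th4} the usable range is really \(\sigma_i\rho_i<\tfrac14\). This sharpens the paper's remark that \(\sigma_i\rho_i=0\) guarantees an invertible \(\boldsymbol{T}\).
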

\begin{proof}
Setting \(\varsigma_i = \varrho_i = 0\) in \eqref{eq27} yields

\begin{equation}\label{eq36}
\begin{cases}
ac = -\sigma_i \\
ad + cb = 1 \\
bd = -\rho_i
\end{cases}
\end{equation}

We consider four cases.

1) \(\sigma_i \neq 0\), \(\rho_i \neq 0\): From \eqref{eq36}, we have \(b \neq 0\) and \(c \neq 0\). Substituting \(a=-\sigma_i/c\) and \(d=-\rho_i/b\) into the second equation gives
\begin{equation}
\label{eq37}
c^2b^2 - cb + \sigma_i\rho_i = 0.
\end{equation}
Let \(z:=cb\). Then \eqref{eq37} becomes \(z^2-z+\sigma_i\rho_i=0\), which has real roots if
\begin{equation}
\label{eq38}
1-4\sigma_i\rho_i \ge 0 \iff \sigma_i\rho_i \le \frac{1}{4}.
\end{equation}

2) \(\sigma_i = \rho_i = 0\): A real solution is \(a=d=1\), \(b=c=0\). Hence \(\sigma_i\rho_i=0\le 1/4\).

3) \(\sigma_i = 0\), \(\rho_i \neq 0\): For any \(d\neq 0\), choosing \(a=1/d\), \(b=-\rho_i/d\), and \(c=0\) satisfies \eqref{eq36}. Again, \(\sigma_i\rho_i=0\le 1/4\).

4) \(\sigma_i \neq 0\), \(\rho_i = 0\): For any \(a\neq 0\), choosing \(d=1/a\), \(b=0\), and \(c=-\sigma_i/a\) satisfies \eqref{eq36}. Again, \(\sigma_i\rho_i=0\le 1/4\).

\end{proof}
Proposition \ref{p3} implies that a real solution of \eqref{eq36} exists whenever \(\sigma_i\rho_i \le 1/4\), even if the original subsystem is short of IODP. Moreover, because \(\sigma_i\) and \(\rho_i\) quantify the differential passivity margins associated with the two I/O channels, a practical synthesis strategy is to enforce one channel at the critical level (set one index to zero) and let the remaining shortage be absorbed by the other channel. This yields \(\sigma_i\rho_i=0\), which guarantees a real, invertible transformation matrix can be found; therefore, the method remains applicable to a broad class of IODP-shortage subsystems rather than only a narrow special case. 

\begin{example}[control for SG]
In Example 1, the SG satisfies \(\text{IODP}(-8.5098, 0; \mathcal{D}_1)\). This indicates insufficient input-output differential passivity in \(\mathcal{D}_1\), so the decentralized stability condition \ref{c1} is not met. Since \(\sigma_1\rho_1=0\), Proposition \ref{p3} guarantees the existence of an I/O transformation matrix \(\boldsymbol{T}_1\) satisfying \eqref{eq24} with \(\varsigma_i = \varrho_i = 0\). Solving \eqref{eq27} gives 

\begin{equation}\label{eq48}
   \boldsymbol{T}_1 = \begin{bmatrix}
 1.0000      &   0  &       0  &       0\\
         0   & 1.0000        & 0  &       0\\
    8.5098    &     0 &   1.0000    &     0\\
         0 &   8.5098   &   0 &   1.0000
\end{bmatrix}
\end{equation}

Moreover, Assumption \ref{as3} holds because \(y_i\) is affine in \(u_i\). Consequently, the controlled bus 1, viewed from its external port variables \((\tilde{u}_1,\tilde{y}_1)\), satisfies \(\text{IODP}(0,0;\widetilde{\mathcal{D}}_1)\), where
\[
\widetilde{\mathcal{D}}_1:=\big\{(x_1,\tilde{u}_1)\in\mathbb{R}^{n_1}\times\mathbb{R}^2\mid (x_1,v(x_1,\tilde{u}_1))\in\mathcal{D}_1\big\}.
\]
\end{example}




\section{Case Study}
This section validates the proposed theory and control framework on a modified IEEE 39-bus system with heterogeneous devices. To avoid ambiguity, the equilibrium information used in this section appears only at the scenario-certification stage: the local IODP conditions and certified regions are derived without fixing one nominal equilibrium, whereas each operating scenario is certified by checking whether its computed equilibrium lies in the corresponding certified region. We first present the test-system setup, then report the IODP analysis and passivation-control results, and finally verify stability through disturbance tests, random operating scenarios, and computational-time comparisons.

\subsection{System Description and Setup}
We consider the modified IEEE 39-bus benchmark system, as shown in Fig. \ref{fg8}. The original SGs at buses 30 and 31 are replaced by inverter-based sources with droop control (CD). The model of CD is \cite{b17}: 
\[
\begin{cases}
\tau_1 \dot{\theta}_i = -(\theta_i - \theta^{ref}) - d_1(P_i - P^{ref}) \\
\tau_2 \dot{V}_i = -(V_i - V^{ref}) - d_2(Q_i - Q^{ref})
\end{cases}
\]
where $V_i$ and $\theta_i$ are the magnitude and phase angle of the bus voltage in the $DQ$ reference frame; $P_i$ and $Q_i$ are the output active and reactive power; $P^{ref}$, $Q^{ref}$, $\theta^{ref}$, and $V^{ref}$ are the reference values; $\tau_1$ and $\tau_2$ are control time constants; $d_1$ and $d_2$ are droop coefficients.

Inverter-based sources are also integrated at buses 1, 7, 16, 18, 21, and 27, adopting either CD or virtual synchronous generator (VSG) control. The model of VSG is \cite{b18}:
\[
\begin{cases}
\dot{\theta}_i = \omega_i \\
M_i \dot{\omega}_i = -D_i \omega_i - P_i + P^{ref} + K_I \theta_i \\
T_i \dot{V}_i = K_Q (Q^{ref} - Q_i)/V_i
\end{cases}
\]
where $M_i$ is the virtual inertia; $T_i$ is the voltage control time constant; $K_Q$ is the reactive power control coefficient; $K_I\geq0$ is the optional frequency integral control coefficient.

The remaining 12 loads are constant impedance load buses. In addition to the above-mentioned power source and load buses, the system also includes 11 intermediate connection buses. Therefore, the system has a total of 16 dynamic buses and 23 static buses. 

The network and load parameters of the IEEE 39-bus system are taken from the MATPOWER toolkit, the parameters of SG1--SG8 are taken from \cite{b20}, and the parameters of inverter-based devices are listed in Table \ref{tab3}.

\begin{figure}[!t]
    \centering
    \includegraphics[width=0.45\textwidth]{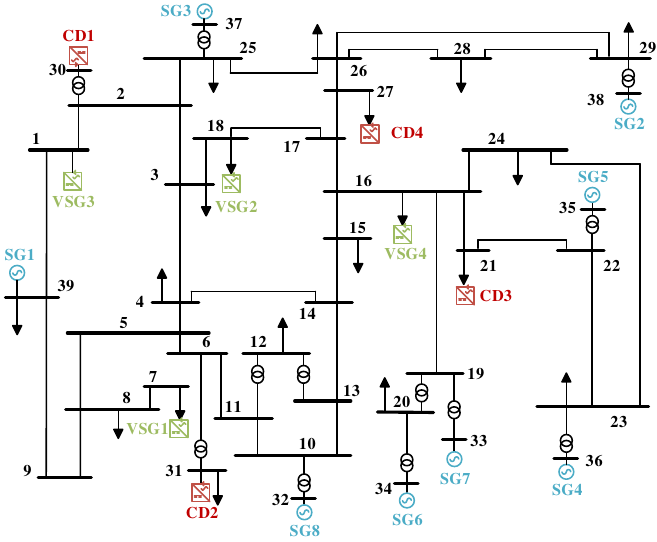}
    \caption{Schematic diagram of heterogeneous devices accessing the IEEE 39-bus system.}
    \label{fg8}
\end{figure}

\begin{table}[!htbp]
    \centering
    \caption{Parameters of inverter-based devices.}
    \begin{tabular}{c c c}
        \toprule
        Component & Parameter Name & Parameter Value (p.u.) \\
        \midrule
        CD1 & $\tau_1, \tau_2, d_1, d_2$ & 0.3, 8.1, 0.01, 0.01 \\ 
        CD2 & $\tau_1, \tau_2, d_1, d_2$ & 0.25, 9, 0.01, 0.01 \\
        CD3 & $\tau_1, \tau_2, d_1, d_2$ & 0.35, 8.4, 0.01, 0.01 \\
        CD4 & $\tau_1, \tau_2, d_1, d_2$ & 0.38, 7.2, 0.01, 0.01 \\
        VSG1 & $M, D, K_Q, K_I, T$ & 20, 0.5, 0.01, 0.5, 8 \\
        VSG2 & $M, D, K_Q, K_I, T$ & 14, 0.3, 0.01, 0.2, 7 \\
        VSG3 & $M, D, K_Q, K_I, T$ & 8, 0.2, 0.02, 0.4, 5 \\
        VSG4 & $M, D, K_Q, K_I, T$ & 10, 0.4, 0.02, 0.3, 7 \\
        \bottomrule
    \end{tabular}
    \label{tab3}
\end{table}

\subsection{IODP Analysis and Stability Verification}
Under the given parameters, the IODP indices $(\sigma_i, \rho_i)$ of each dynamic bus are computed using Proposition \ref{p4}. We remark again that the IODP index for the same device is non-unique. Here, column 2 of Table \ref{tab4} reports the index that minimizes $\sigma_i^2 + \rho_i^2$. If this pair does not satisfy $\sigma_i\rho_i \le 1/4$, we recompute a new index pair by solving Proposition \ref{p4} under the additional constraint $\rho_i=0$. The recomputed pair is reported in column 3 and then used in subsequent control design.

The I/O transformation control is applied to each dynamic device. To obtain the control law, we fix the free parameter $d=1$ and solve \eqref{eq27}; the resulting matrices are reported in column 4 of Table \ref{tab4}. By Theorem \ref{th4}, each controlled dynamic bus \eqref{eq23} satisfies the decentralized stabilization Condition \ref{c1}.

For static buses, the intermediate buses and constant impedance load buses naturally satisfy IODP$(0, 0; \mathbb{R}^2)$ by Propositions \ref{p1} and \ref{p5}. Therefore, under the designed control, all dynamic and static buses satisfy the proposed decentralized stability condition. By Theorem \ref{th1}, the resulting interconnected power system is thus predicted to be stable.

\begin{table*}[!t]
    \centering
    \footnotesize
    \caption{The IODP index and control parameters of each dynamic device.}
    \label{tab4}
    \resizebox{\textwidth}{!}{%
        \begin{tabular}{@{}lcccc@{}}
            \toprule
            \multirow{2}{*}{Device} & \multicolumn{2}{c}{Pre-control IODP Index} & I/O Transformation Control Parameters & Post-control IODP Index \\
            \cmidrule(lr){2-3} \cmidrule(lr){4-4} \cmidrule(lr){5-5}
            & Minimum-norm Pair$^\text{a}$ & Constrained Pair ($\rho_i=0$)$^\text{b}$ & $a, b, c, d$ & $\varsigma, \varrho$ \\
            \midrule
            SG1 & -0.5432, -0.5140 & -2638.0, 0 & 1, 0, 2638.0, 1 & 0, 0 \\
            SG2 & -2.0197, -0.8620 & -129.9007, 0 & 1, 0, 129.9007, 1 & 0, 0 \\
            SG3 & -1.4902, -0.8303 & -98.3183, 0 & 1, 0, 98.3183, 1 & 0, 0 \\
            SG4 & -1.9010, -0.9138 & -108.8303, 0 & 1, 0, 108.8303, 1 & 0, 0 \\
            SG5 & -1.8149, -0.8389 & -139.3291, 0 & 1, 0, 139.3291, 1 & 0, 0 \\
            SG6 & -2.0974, -0.9206 & -40.3053, 0 & 1, 0, 40.3053, 1 & 0, 0 \\
            SG7 & -2.6576, -1.0857 & -159.0196, 0 & 1, 0, 159.0196, 1 & 0, 0 \\
            SG8 & -2.1899, -0.9341 & -131.7683, 0 & 1, 0, 131.7683, 1 & 0, 0 \\
            CD1 & -0.0156, 0 & \textemdash & 1, 0, 0.0156, 1 & 0, 0 \\
            CD2 & -0.0149, 0 & \textemdash & 1, 0, 0.0149, 1 & 0, 0 \\
            CD3 & -0.0153, 0 & \textemdash & 1, 0, 0.0153, 1 & 0, 0 \\
            CD4 & -0.0153, 0 & \textemdash & 1, 0, 0.0153, 1 & 0, 0 \\
            VSG1 & -0.4063, -0.1576 & \textemdash & 0.9312, 0.1576, 0.4363, 1 & 0, 0 \\
            VSG2 & -0.5496, -0.4276 & \textemdash & 0.6225, 0.4276, 0.8828, 1 & 0, 0 \\
            VSG3 & -1.1519, -0.3010 & -1.2602, 0 & 1, 0, 1.2602, 1 & 0, 0 \\
            VSG4 & -0.6025, -0.4853 & -6.1644, 0 & 1, 0, 6.1644, 1 & 0, 0 \\
            \bottomrule
        \end{tabular}%
    }

    \vspace{0.5ex}
    {\raggedright\footnotesize
    $^\text{a}$ Minimum-norm IODP index pair $(\sigma_i, \rho_i)$, obtained by minimizing $\sigma_i^2 + \rho_i^2$.\\
    $^\text{b}$ Recomputed IODP indices under the additional constraint $\rho_i=0$ if the minimum-norm pair does not satisfy $\sigma_i\rho_i \le 1/4$.\par}
\end{table*}

To validate this theoretical prediction using an independent and widely accepted approach, a time-domain simulation is conducted under a prescribed disturbance. Specifically, at $t = 5$ s, a step increase of 5\% in both active and reactive loads is applied to all load buses. This disturbance introduces transient dynamics and drives the system away from its pre-disturbance operating condition.

The corresponding responses are shown in Fig.~\ref{fg19}. It can be observed that all trajectories remain bounded during the transient and eventually settle to new steady-state values after the disturbance.
These results indicate that the system remains stable under the applied disturbance, which justifies our theory.

\begin{figure}[H]
    \centering
    \includegraphics[width=1.0\columnwidth]{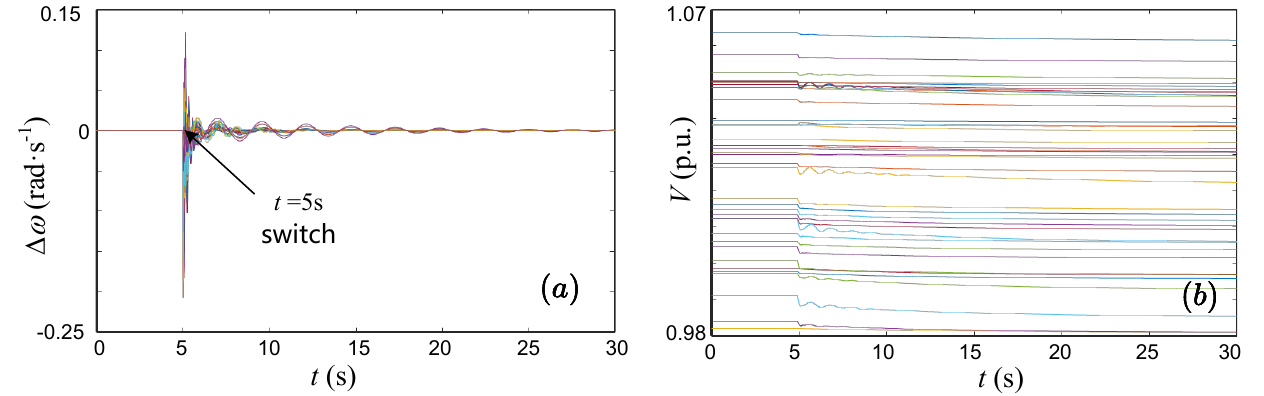}
    \caption{Time-domain responses under a 5\% step increase in active/reactive loads at all load buses: (a) frequency trajectories of generator buses; (b) voltage magnitudes at representative buses.}
    \label{fg19}
\end{figure}

\subsection{Random Operating Scenarios}

A core feature of the proposed method is that the local certificate is not anchored to a single nominal equilibrium: the IODP regions are constructed for an equilibrium set and can therefore accommodate changes in operating points. For any sampled operating scenario, we check whether the corresponding local equilibrium components lie inside the certified IODP regions. If this membership test is satisfied, the scenario is certified as asymptotically stable by our method. To illustrate this mechanism, we consider five load-fluctuation ranges and generate 1000 random scenarios for each range. For a given level $\alpha$, the net loads at each bus are scaled by independently sampled factors from the uniform interval $[1-\alpha,\,1+\alpha]$, leading to different equilibrium points. For each equilibrium, we apply the above membership test and compare the result with the centralized Jacobian-eigenvalue analysis. 

Whenever the proposed method certifies an operating point as stable, the result is fully consistent with the Jacobian-eigenvalue analysis, which supports the correctness of the proposed certification framework. At the same time, the method is conservative in some scenarios: an operating point outside the IODP region can still be stable according to the Jacobian test. This observation is consistent with our theory because the proposed condition is sufficient rather than necessary. The percentages of such conservative scenarios under different fluctuation levels are summarized in Table~\ref{table5}. Overall, the conservatism is negligible under normal operating-condition variations and becomes noticeable only under relatively severe fluctuations, which still supports the practical value of the method.

To visualize the equilibrium points and IODP regions, consider the $\alpha=\pm10\%$ fluctuation level. Fig.~\ref{fg17}(a) shows the projection of the VSG equilibrium points at bus 18 onto the $V$--$\theta$ cross-section of its IODP region. Fig.~\ref{fg17}(b) shows the projection of the SG equilibrium points at bus 32 onto the $V$--$(\delta-\theta)$ cross-section of its IODP region.
\begin{figure}[!htbp]
    \centering
    \includegraphics[width=1.0\columnwidth]{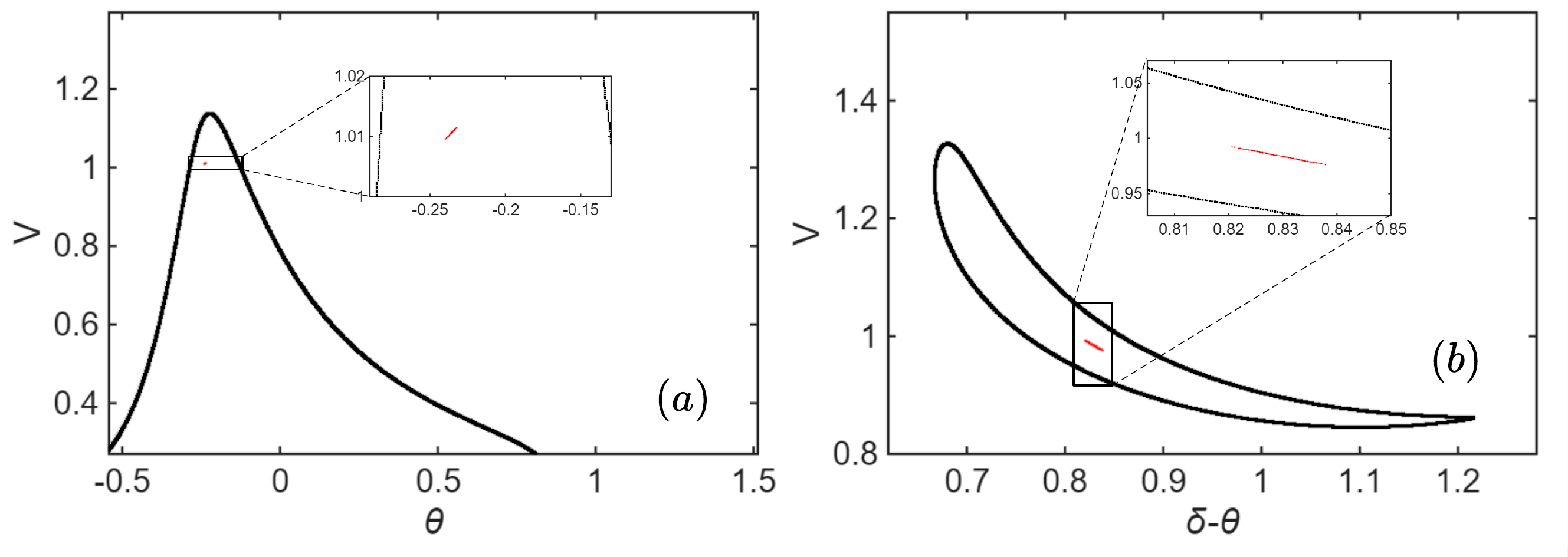}
    \caption{Location of equilibrium points in 1000 random scenarios of VSG and SG under $\pm10\%$ load fluctuations.}
    \label{fg17}
\end{figure}

\subsection{Computational Time Comparison}

To further highlight the computational advantage of the proposed method, we compare the runtime of the centralized eigenvalue-based approach with that of the decentralized IODP-based approach over the 1000 random operating scenarios. Because the local checks in the proposed approach can be performed in parallel, we define the decentralized runtime as the longest local computation time required to test whether a local equilibrium $(x_i^*,u_i^*)$ lies in the corresponding IODP region $\mathcal{D}_i$. This test is implemented by computing the eigenvalues of the left-hand side of \eqref{eq28}. For the centralized approach, the runtime is defined as the time required to compute the eigenvalues of the system Jacobian at the equilibrium $(x^*,u^*)$.

The total computation time over 1000 random operating scenarios is summarized in Table~\ref{table5} for each different load-fluctuation level. All computation is performed in Matlab 2025a on the same PC with an AMD Ryzen 7 7840H CPU (3.80 GHz).

\begin{table}[!htbp]
    \centering
    \caption{Conservative scenarios and total computation times under different load-fluctuation levels.}
    \scriptsize
    \begin{tabular}{c c c c}
        \toprule
        \multirow{2}{*}{Fluctuation level} & \multirow{2}{*}{Conservative scenarios} & \multicolumn{2}{c}{Total computation time (s)} \\
        \cmidrule(lr){3-4}
         &  & Decentralized & Centralized \\
        \midrule
        $\pm10\%$ & $0\%$ & 0.38 & 4.06 \\
        $\pm20\%$ & $0\%$ & 0.42 & 4.40 \\
        $\pm30\%$ & $0\%$ & 0.42 & 4.41 \\
        $\pm35\%$ & $0.2\%$ & 0.44 & 4.64 \\
        $\pm40\%$ & $3.4\%$ & 0.42 & 4.25 \\
        $\pm45\%$ & $10.6\%$ & 0.37 & 3.98 \\
        \bottomrule
    \end{tabular}
    \label{table5}
\end{table}

The centralized method is consistently much slower than the proposed decentralized method across all tested scenarios. According to Table~\ref{table5}, its total runtime is approximately 10.12--10.76 times that of the decentralized method.

This gap is also consistent with the underlying computational complexity. In Matlab, computing the eigenvalues of a matrix $A$, i.e., `eig(A)`, is implemented through QR-based routines whose dominant complexity is of order $O(n^3)$~\cite{b74}. In the centralized method, $A$ is the 44-dimensional system Jacobian associated with all dynamic-device states in the IEEE 39-bus system. By contrast, in the proposed decentralized method, verifying \eqref{eq28} only requires the eigenvalues of a matrix built from the state and input variables of a single device, whose dimension is only 4 or 5. The dominant complexity is therefore reduced from approximately $O(44^3)$ to $O(4^3)$ or $O(5^3)$. Moreover, as the system size grows, the dimension of the centralized matrix increases accordingly, whereas that of the local test remains unchanged unless higher-order device models are introduced.


       

\section{Conclusion}
This paper developed a decentralized and equilibrium-set-oriented framework for stability analysis and control of power systems with heterogeneous devices. By introducing input--output differential passivity (IODP), the system-level stability requirement is decomposed into local bus-level conditions that do not depend on a specific operating equilibrium. For any given operating scenario, certification is then reduced to checking whether the corresponding local equilibrium components lie in the certified IODP regions. A Krasovskii-type characterization, together with the proposed I/O-transformation-based passivation controller, provides a constructive mechanism for enforcing these conditions on devices with insufficient IODP. Case studies on a modified IEEE 39-bus system show that the proposed method can consistently certify stability under varying operating points while substantially reducing computational burden through decentralized local verification. This feature is particularly attractive for renewable-rich power grids with many heterogeneous devices and frequently changing operating conditions.
Future work will focus on reducing the conservatism of the certification conditions and developing more practical controller implementations for dynamic devices.




\bibliographystyle{IEEEtran}
\bibliography{IEEEabrv}
%
	
\end{document}